\newtheorem{thm}{Theorem}%[section]
\newtheorem{corl}[thm]{Corollary}
\newtheorem{lma}[thm]{Lemma} 
\newtheorem{prop}[thm]{Proposition}
\newtheorem{defn}[thm]{Definition}
\newtheorem{ex}[thm]{Example}
\newtheorem{rem}[thm]{Remark}
\def\A{\mathcal{A}}
\DeclareMathOperator{\ad}{ad}
\def\bar{\overline}
\def\C{\mathbb{C}}
\def\cO{\mathcal{O}}
\def\dd{\mathrm{d}}
\DeclareMathOperator{\End}{End}
\def\F{\mathbb{F}}
\def\g{\mathfrak{g}}
\def\gf{\mathrm{gf}}
\def\gh{\textup{gh}}
\def\H{\mathcal{H}}
\def\bH{\mathbb{H}}
\def\Hom{\mathrm{Hom}}
\def\half{\tfrac{1}{2}}
\def\k{\mathbf{k}}
\def\nn{\nonumber}
\def\o{\mathfrak{o}}
\DeclareMathOperator{\ord}{ord}
\def\R{\mathbb{R}}
\def\S{\mathcal{S}}
\def\sp{\mathfrak{sp}}
\def\su{\mathfrak{su}}
\def\SU{\mathcal{SU}}
\DeclareMathOperator{\tr}{Tr}
\def\tilde{\widetilde}
\def\U{\mathcal{U}}
\def\u{\mathfrak{u}}
\DeclareMathOperator{\Vol}{Vol}
\def\Z{\mathbb{Z}} 
\def\g{\mathfrak{g}}
\title[Renormalizability conditions for almost commutative manifolds]{Renormalizability conditions for \\almost commutative manifolds}
\author{Walter D. van Suijlekom}
\address{Institute for Mathematics, Astrophysics and Particle Physics,
Radboud University Nij-megen, Heyendaalseweg 135, 6525 AJ Nijmegen, The Netherlands}
\date{20 December 2011}
\begin{document}
\begin{abstract}
We formulate conditions under which the asymptotically expanded spectral action on an almost commutative manifold is renormalizable as a higher-derivative gauge theory. These conditions are of graph theoretical nature, involving the Krajewski diagrams that classify such manifolds. 
This generalizes our previous result on (super)renormalizability of the asymptotically expanded Yang--Mills spectral action to a more general class of particle physics models that can be described geometrically in terms of a noncommutative space. In particular, it shows that the asymptotically expanded spectral action which at lowest order gives the Standard Model of elementary particles is renormalizable. 

\end{abstract}

\maketitle

\section{Introduction}

Over the past few years it has turned out that many particle physics models can be described geometrically by modifying the internal structure of spacetime, making it slightly noncommutative. 
Indeed, there are so-called {\it almost commutative manifolds} that allow for a geometrical 
derivation of Yang--Mills theory \cite{CC97, BoeS10}, or even the full Standard Model, including Higgs potential and neutrino mass terms \cite{CCM07, CM07, JKSS07}. Theories that go beyond the Standard Model were described in \cite{Ste06,Ste07,SS07,JS08}.
Also supersymmetric models such as $N=1$ super-Yang--Mills theory and supersymmetric QCD have been derived geometrically \cite{BroS10,BroS11}. 

The basic idea in all these examples is that one describes an almost commutative manifold by spectral data, and then applies a general spectral action principle to derive physical Lagrangians. 
This paper continues on some of our recent results on renormalizability of the asymptotically expanded spectral action considered as higher-derivative theories: in \cite{Sui11b,Sui11c} we have shown that the Yang--Mills model is superrenormalizable as a gauge theory by observing that the asymptotically expanded spectral action contains natural higher-derivative regulators. We stress the importance of taking an asymptotic expansion, as it allows for a derivation of local Lagrangians, in contrast to eg. \cite{ILV11}. There, the full spectral action was considered as a non-local field theory, behaving completely differently for large momenta. 

In the present paper, we will formulate conditions for almost commutative manifolds that render the (asymptotically expanded) spectral action renormalizable as a gauge theory; even superrenormalizable in special cases. We show that these conditions apply to the aforementioned physical models. A convenient way to express our conditions is in terms of cycles in Krajewski diagrams \cite{Kra97} for the finite noncommutative geometries.

As we proceed, we note that the asymptotically expanded spectral action considered as a higher-derivative gauge theory is not multiplicatively renormalizable. This is in concordance with the interpretation of the spectral action as defining a physical theory at one particular mass scale, as already proposed in \cite{CC96,CC97}. For the Standard Model this mass scale is the GUT scale. 

This paper is organized as follows. In Section \ref{sect:ac} we recall some basic definitions from noncommutative geometry, specializing to almost noncommutative manifolds of the form $M \times F$: a product of an ordinary Riemannian manifold $M$ with a finite noncommutative space $F$. We recall Krajewski's diagrammatic classification and formulate a notion of R-connectedness; it will be related to renormalizability later on. We derive the gauge and scalar fields as a consequence of the noncommutative structure of $M \times F$. Essentially, these are coefficients for a twisted Dirac operator $D$. 

In Section \ref{sect:sa} we define the spectral action for $M \times F$ as
$$
\tr f (D/\Lambda)
$$
for some positive function $f$ and a cutoff parameter $\Lambda$. 
This is considered as an action functional in the gauge and scalar fields. 
We derive the lowest-order terms in an asymptotic expansion as $\Lambda \to \infty$, as well as the terms at any order in $\Lambda$ but quadratic in the fields. 

In Section \ref{sect:gf} we introduce a gauge fixing for almost commutative manifolds, much inspired by 't Hooft's $R_\xi$-gauge fixing for models with spontaneous symmetry breaking. This allows in Section \ref{sect:pc} for a power-counting argument to show that the asymptotically expanded spectral action on $M \times F$ is (super)renormalizable. Using results from the relevant BRST-cohomology, this is then completed to show renormalizability as a gauge theory, provided the Krajewski diagram for the finite space $F$ satisfies a certain graph-theoretical property, namely, the aforementioned R-connectedness. In particular, this applies to the asymptotically expanded spectral action that at lowest order is the Standard Model of elementary particles.

\section*{Acknowledgements}
The author would like to thank Matilde Marcolli for hospitality and Caltech for financial support during a visit in April 2011. The ESF is thanked for financial support under the program `Interactions of Low-Dimensional Topology and Geometry with Mathematical Physics (ITGP)'. NWO is acknowledged for support under VENI-project 639.031.827. Thijs van den Broek is gratefully acknowledged for a careful proofreading of the manuscript.

\section{Almost commutative manifolds}
\label{sect:ac}
The object of study in this paper is the {\it spectral action} \cite{CC96} for almost commutative manifolds. As a motivating example, let us start with a description of ordinary, commutative manifolds in terms of purely spectral data. Suppose $M$ is a compact Riemannian spin manifold, with a spinor bundle $\S$. Then the {\it Hilbert space} $L^2 (M,\S)$ of its square-integrable sections sets the stage for such a spectral description. The {\it Dirac operator} $D_M = i \gamma^\mu \circ \nabla^\S_\mu$ associated to the metric via the Levi--Civita connection $\nabla^\S$ lifted to the spinor bundle defines a self-adjoint operator on $L^2(M,\S)$. Ellipticity of $D_M$ as a differential operator and compactness of $M$ imply that the resolvents of $D_M$ are compact operators. Finally, the Dirac operator is compatible with the action of the coordinate functions: the action of functions $f \in C^\infty(M)$ on $L^2(M,\S)$ by pointwise multiplication has bounded commutators $[D_M, f]$.

If the manifold $M$ is of even dimension $m$, there is a grading (chirality) $\gamma_M$, making $D_M$ an odd operator. Finally, spin-manifolds are selected out of spin$^c$-manifolds by the charge conjugation operator: it is an anti-linear operator $J_M: L^2(M,\S) \to L^2(M,\S)$.

This canonical `triple' $(C^\infty(M), L^2(M,\S), D_M; \gamma_M, J_M)$ motivates the following abstract definition of a spectral triple \cite{C94}.

\begin{defn}
A {\rm spectral triple} $(\A,\H,D)$ is given by an unital $*$- algebra $\A$
represented faithfully as operators in a Hilbert space $\H$ and a self-adjoint
operator $D$ such that $(1+D^2)^{-1/2}$ is a compact operator and $[D,a]$ bounded for $a\in \A$.

\medskip

A spectral triple is {\rm even} if the Hilbert space $\H$ is
endowed with a $\Z/2\Z$-grading $\gamma$ such that $[\gamma,a]=0$ and $\{ \gamma,D\}=0$. 

\medskip

A {\rm real structure} of $KO$-dimension  $n \in \Z/8 \Z$ on a spectral triple is an antilinear isometry $J: \H \to \H$ such that
\begin{equation*}
J^2 = \varepsilon, \qquad JD = \varepsilon' DJ, \qquad J\gamma = \varepsilon'' \gamma J \quad \text{(even case)}.
\end{equation*}
The numbers $\varepsilon ,\varepsilon' ,\varepsilon'' \in \{ -1,1\}$
are a function of $n \mod 8$:
$$
\begin{array}
{|c| r r r r r r r r|} \hline n &0 &1 &2 &3 &4 &5 &6 &7 \\
\hline \hline
\varepsilon  &1 & 1&-1&-1&-1&-1& 1&1 \\
\varepsilon' &1 &-1&1 &1 &1 &-1& 1&1 \\
\varepsilon''&1 &{}&-1&{}&1 &{}&-1&{} \\  \hline
\end{array}
$$
Moreover, with $b^0 = J b^* J^{-1}$ we impose that
\begin{equation*}
[a,b^0] = 0, \qquad [[D,a],b^0] = 0, \qquad \forall \, a,b \in \A,
\end{equation*}
A spectral triple with a real structure is called a {\rm real spectral triple}.
\end{defn}
Thus, the real structure gives $\H$ the structure of an $\A$-bimodule. In other words, the algebra $\A \otimes \A^\circ$ acts on $\H$, where $\A^\circ$ is the opposite algebra to $\A$.

\begin{defn}
Let $(\A,\H,D)$ be a spectral triple. The $\A$-bimodule of {\rm Connes' differential one-forms} is given by
$$
\Omega^1_D(\A) := \left\{ \sum_k a_k[D,b_k]: a_k,b_k \in \A \right\}
$$
\end{defn}
In the case of the canonical triple, Clifford multiplication establishes an isomorphism (cf. \cite{C94, Lnd97})
$$
\Omega^1(M) \simeq \Omega^1_{D_M}(C^\infty(M)).
$$
%See \cite{C94, Lnd97} for more details.

Besides the canonical triple for a Riemannian spin manifold $M$, there is the following class of simple examples. 

\begin{defn}
A {\rm finite real spectral triple} is a spectral triple for which the Hilbert space is finite dimensional. We will write such a spectral triple suggestively, 
$$
F:= (A_F,H_F,D_F; \gamma_F, J_F)
$$
\end{defn}

\begin{ex}
\label{ex:ym-st}
The algebra $M_n(\C)$ of complex $n \times n$ matrices acts on itself by left and right matrix multiplication; this gives rise to a finite real spectral triple $$
(A_F= M_n(\C),H_F= M_n(\C),D_F=0; J_F=(\cdot)^*).
$$
This example is closely related to Yang--Mills theories (cf. \cite{CC96,CC97})
\end{ex}

\begin{ex}
\label{ex:sm-st}
The noncommutative description of the Standard Model is based on the real algebra 
$$
A_F = \C \oplus \bH \oplus M_3(\C).
$$
It is represented on $\C^{96}$, where 96 is 2 (particles and anti-particles) times 3 (families) times 4 leptons plus 4 quarks with 3 colors each. Finally, there is a $96 \times 96$ matrix $D_F$, a grading $\gamma_F$ and real structure $J_F$, which are explicitly described in \cite{CCM07, CM07}; they constitute a real spectral triple of KO-dimension $6$.
\end{ex}

We will be interested in a combination of Riemannian spin manifolds and such finite triples. 
\begin{defn}
An {\rm almost commutative manifold} (AC manifold) is given by the tensor product of the canonical triple and a finite spectral triple: 
$$
M \times F:= ( C^\infty(M)\otimes A_F, L^2(M,S) \otimes H_F, D_M \otimes 1 + \gamma_5 \otimes D_F)
$$
\end{defn}
The picture one should have in mind is that of Kaluza--Klein theories, where the spacetime manifold was extended by an extra dimension. In the present case, this extra dimension is a finite noncommutative space.

\subsection{Classification of finite spectral triples}
\label{sect:class}
We follow the work of Krajewski \cite{Kra97} where a diagrammatic way was introduced to classify finite real spectral triples, given by quintuples $(A_F,H_F,D_F;\gamma_F,J_F)$. We formulate these diagrams in slightly different terms. Recall that if $M$ is an $A_F$-bimodule, the {\it contragredient $A_F$-bimodule} $M^\circ$ is defined by
$$
M^\circ = \{ \bar m  : m \in M \}
$$
with action $\bar m \mapsto a \bar m b = \bar{b^* m a^*}$ for all $a,b \in A_F, m\in M$. In particular, if $M$ is a left $A_F$-module, $M^\circ$ is a right $A_F$-module.

The structure of $A_F$ can be determined explicitly from Wedderburn's Theorem:
\begin{equation}
\label{eq:wedderburn}
A_F \simeq \bigoplus_{i=1}^N M_{k_i}(\F_i).
\end{equation}
for some $k_1,\ldots, k_N$ and $\F_i = \R,\C$ or $\bH$ depending on $i$.

In the following, we will denote by $\Gamma^{(0)}$ and $\Gamma^{(1)}$ the vertex and edges sets of an oriented graph $\Gamma$ with source and target maps $s,t: \Gamma^{(1)} \to \Gamma^{(0)}$. Also, we indicate by $(v_1 v_2)$ an edge between vertices $v_1$ and $v_2$. 

%Denote by $\Bimod^s(A)$ the category of simple $A$-bimodules. If $A$ is finite-dimensional, then the objects of 

\begin{defn}
Given a finite-dimensional algebra $A_F = \bigoplus_{i} M_{k_i}(\F_i)$, a {\rm Krajewski diagram} for $A_F$ of KO-dimension $n$ is an oriented decorated graph $\Gamma$ with the following properties
\begin{enumerate}
\item Edges between two vertices come in pairs with opposite orientation: if $e = (v_1v_2)$ is an edge, then there also exists an edge $\bar e = (v_2 v_1)$ and these come in pairs. 
\item Each vertex $v$ is decorated by an irreducible $A_F$-bimodule $M_v$ together with a choice of basis, {\it i.e.} $M_v = \C^{n_v} \otimes \C^{m_v\circ}$ for some $n_v, m_v \in \{ k_1,\ldots, k_N\}$.
\item Each edge $e$ is decorated by a (non-zero) first-order operator $D_e: M_{s(e)} \to M_{t(e)}$, {\it i.e.} such that
$$
D_e ( a m b) = a D_e (m b) + D_e(a m)b - a D_e(m) b ; \qquad (a,b \in A_F, m \in M_v),
$$
and $D_{\bar e} = D_e^*$. 
\item There is an involutive graph automorphism $j: \Gamma \to \Gamma$ such that
$n_{j(v)} = m_v$ for all $v \in \Gamma^{(0)}$.
In other words, $M_{j(v)} = M_v^\circ$. 
If $J_v: M_v \to M_v^\circ$ is the anti-linear map that assigns to a bimodule its contragredient bimodule,\footnote{Actually, this is slightly more subtle in the case of KO-dimension 2,3,4, or 5; in that case, one needs two vertices $v_1, v_2$ and one has two maps $J_{v_1}: M_{v_1} \to M_{j(v_2)}$ and $J_{v_2}: M_{v_2} \to M_{j(v_1)}$ that satisfy $J_{j(v_2)} \circ J_{v_1}   = -1$. } 
 we demand that for an edge $e= (v_1v_2)$:
$$
D_{j(e)} =\epsilon'  J_{j(v_2)} D_e J_{j(v_1)}^{-1}
$$
\end{enumerate}
In the even case, there is an additional labeling on the vertices by signs $\pm 1$ and the edges connect only vertices of opposite signs. Moreover, if $v$ has sign $\pm 1$, then $j(v)$ has sign $\pm \epsilon''$. 
%$J_v$ maps the sign $\pm 1$ at $v$ to $\pm \epsilon''$ on $j(v)$. 
\end{defn}

Usually, one depicts a Krajewski diagram as embedded in the plane, with the columns and rows labeled by the integers $k_i$ that appear in the decomposition \eqref{eq:wedderburn} of $A_F$ . One places a node at $(n_v, m_v)$ for each vertex in $\Gamma$ with $M_v = \C^{n_v} \otimes \C^{m_v\circ}$. The pairs $(e,\bar e)$ of oriented edges in $\Gamma$ are indicated by a single line in the planar diagram, which by (3) run only horizontally or vertically. The graph automorphism $j$ translates as a reflectional symmetry of the diagram along the diagonal (with the labeling $\pm 1$ mapped to $\pm \epsilon''$). 
\begin{figure}[ht]
\begin{center}
\vspace{-10mm}
\begin{tabular}{c}
\begin{xy} 0;<3mm,0mm>:<0mm,3mm>::0;0,
%,(5,-2.5)*{\n_1}
,(5,-2.5)*{\cdots}
,(8,-2.5)*{\k_i}
,(11,-2.5)*{\cdots}
,(14,-2.5)*{\k_j}
,(17,-2.5)*{\cdots}
%,(23,-2.5)*{\n_N}
%,(1.5,-5)*{\n_1^\circ}
,(1.5,-5)*{\vdots}
,(1.5,-8)*{\k_i^\circ}
,(1.5,-11)*{\vdots}
,(1.5,-14)*{\k_j^\circ}
,(1.5,-17)*{\vdots}
%,(1.5,-23)*{\n_N^\circ}
,(8,-8)*\cir(0.3,0){}
,(14,-8)*\cir(0.3,0){}
,(8,-14)*\cir(0.3,0){}
,(8.2,-8);(13.8,-8)**\dir{-}
,(8,-8.2);(8,-13.8)**\dir{-}
\end{xy}
\end{tabular}
\caption{The lines between two nodes represent a non-zero $D_e: \C^{n_{s(e)}} \otimes \C^{m_{s(e)}^\circ} \to \C^{n_{t(e)}} \otimes \C^{m_{t(e)}^\circ}$, as well as its adjoint $D_{\bar e} :\C^{n_{t(e)}} \otimes \C^{m_{t(e)}^\circ} \to  \C^{n_{s(e)}} \otimes \C^{m_{s(e)}^\circ} $. The non-zero components $D_{j(e)}$ and $D_{j(\bar e)}$ are related to $\pm D_{e}$ and $ \epsilon' D_{\bar e}$.}
\end{center}
\end{figure}
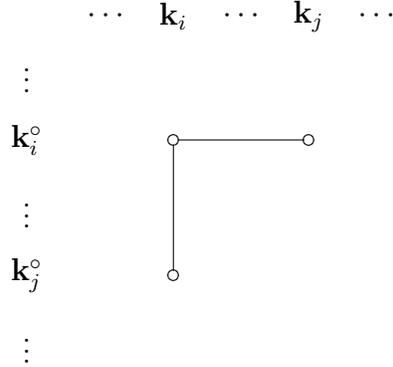

Given a Krajewski diagram $\Gamma = (\Gamma^{(0)}, \Gamma^{(1)})$ for $A_F$, we construct a finite spectral triple for the algebra $A_F$ as follows. We define
$$
H_F = \bigoplus_{v \in \Gamma^{(0)}} M_v =  \bigoplus_{v \in \Gamma^{(0)}} \C^{n_v} \otimes \C^{m_v \circ}
$$
on which $A_F$ acts on the left. The real structure $J_F$ is the sum of operators 
$$
J_v: M_v \to M_{j(v)}. 
$$
The Dirac operator $D_F$ is the sum of the operators
$$
D_e: M_{s(e)} \to M_{t(e)}.
$$
This defines a symmetric linear operator because $D_{\bar e} = D_e^*$. 
Finally, in the even case the signs on $M_v$ give rise to a grading $\gamma_F$ on $H_F$ for which $D_F$ is odd and the left action of $A_F$ on $H_F$ is even, and such that $\gamma_F J_F = \epsilon'' J_F \gamma_F$. 

\begin{ex}
\label{ex:kra-st}
The Krajewski diagram for the Standard Model is depicted in Figure \ref{kra-sm}. It indicates the precise structure of $H_F=\C^{96}$ as a representation space of $A_F = \C \oplus \bH \oplus M_3(\C)$. The double appearance of the row and column ${\bf 1}$ accounts for the multiplicities of the corresponding representations in $H_F$.
The nonempty blocks in the matrix $D_F$ are indicated by the dotted and straight lines, reflection along the diagonal gives $J_F$, and $\gamma_F$ is of opposite sign when reflected along the diagonal (KO-dimension 6). 
\begin{figure}[ht]
\begin{center}
\vspace{-10mm}
\begin{tabular}{c}
\begin{xy} 0;<3mm,0mm>:<0mm,3mm>::0;0,
,(5,-2.5)*{{\bf 1}}
,(10,-2.5)*{{\bf 2}}
,(15,-2.5)*{{\bf 1}}
,(20,-2.5)*{{\bf \bar 1}}
,(25,-2.5)*{{\bf 3}}
,(1.5,-5)*{{\bf 1}^\circ}
,(1.5,-10)*{{\bf 2}^\circ}
,(1.5,-15)*{{\bf 1}^\circ}
,(1.5,-20)*{{\bf \bar 1}^\circ}
,(1.5,-25)*{{\bf 3}^\circ}
,(10,-5)*\cir(0.3,0){}
,(15,-5)*\cir(0.3,0){}
,(20,-5)*\cir(0.3,0){}
,(10.2,-5);(14.8,-5)**\dir{-}
%,(17.2,-5);(22.8,-5)**\dir{-}
,(10.1,-4.9);(19.9,-4.9)**\crv{(15,-3.5)}
,(10,-25)*\cir(0.3,0){}
,(15,-25)*\cir(0.3,0){}
,(20,-25)*\cir(0.3,0){}
,(10.2,-25);(14.8,-25)**\dir{-}
%,(17.2,-29);(22.8,-29)**\dir{-}
,(10.1,-25.1);(19.9,-25.1)**\crv{(15,-26.5)}
,(5,-10)*\cir(0.3,0){}
,(5,-15)*\cir(0.3,0){}
,(5,-20)*\cir(0.3,0){}
,(5,-10.2);(5,-14.8)**\dir{-}
%,(5,-17.2);(5,-22.8)**\dir{-}
,(4.9,-10.1);(4.9,-19.9)**\crv{(3.5,-15)}
,(25,-10)*\cir(0.3,0){}
,(25,-15)*\cir(0.3,0){}
,(25,-20)*\cir(0.3,0){}
,(25,-10.2);(25,-14.8)**\dir{-}
%,(29,-17.2);(29,-22.8)**\dir{-}
,(25.1,-10.1);(25.1,-19.9)**\crv{(26.5,-15)}
,(5.1,-14.9);(14.9,-5.1)**\dir{--}
\end{xy}
\end{tabular}
\caption{The Krajewski diagram of the Standard Model}
\label{kra-sm}
\end{center}
\end{figure}

\end{ex}

Krajewski has shown in \cite{Kra97} that there is a one-to-one correspondence between such diagrams and finite real spectral triples modulo unitary equivalence (related to the choice of basis on $M_v$). We recall from {\it loc.~cit.}~a useful result for the Dirac operator. 
\begin{lma}[\cite{Kra97}]
\label{lma:splitting-DF}
Let $(A_F,H_F,D_F; \gamma_F,J_F)$ be a finite real spectral triple. There is a decomposition $D_F= D_0 + \Delta + J_F \Delta J_F^{-1}$ where $D_0$ commutes with the action of $A_F \otimes A_F^\circ$ and $\Delta$ commutes with the action of $A_F^\circ$. Moreover, 
\begin{equation}
\label{eq:int-form}
\Delta = - \int_{\U(A_F)} g [\Delta, g^*] d \mu(g).
\end{equation}
where $d\mu$ is the Haar measure on the Lie group $\U(A_F)$ consisting of unitaries in $A_F$.
\end{lma}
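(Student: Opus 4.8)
The plan is to build the three summands by averaging $D_F$ over the compact group $\U(A_F)$ against the left and right actions, and then to use the first-order condition to show that nothing outside these summands survives.

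First I would introduce two averaging maps on $\End(H_F)$. For $g\in\U(A_F)$ write $\pi(g)$ for its left action and $g^0=J_F g^* J_F^{-1}$ for its right action, and set
\[
P_L(T)=\int_{\U(A_F)} \pi(g)\,T\,\pi(g)^* \, d\mu(g),\qquad P_R(T)=\int_{\U(A_F)} g^0\,T\,(g^0)^*\, d\mu(g).
\]
Invariance of the Haar measure shows each is idempotent, and since the unitaries span $A_F$ linearly, $P_L$ (resp.\ $P_R$) is precisely the projection onto the commutant of the left (resp.\ right) action; because the two actions commute, $P_L$ and $P_R$ commute. I then define
\[
D_0=P_LP_R(D_F),\qquad \Delta=(1-P_L)P_R(D_F),
\]
so that by construction $D_0$ commutes with $A_F\otimes A_F^\circ$, while $\Delta$ lies in $\im P_R$ and hence commutes with $A_F^\circ$; note also $P_L(\Delta)=P_L(1-P_L)P_R(D_F)=0$.

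The decomposition then amounts to the identity $(1-P_L)(1-P_R)(D_F)=0$, which is where the first-order condition enters and which I expect to be the main obstacle. Decomposing an arbitrary $T\in\End(H_F)$ into its four pieces under $P_L,P_R$, a short computation using $[a,b^0]=0$ shows that the double commutator $[[T,a],b^0]$ is insensitive to the three pieces lying in $\im P_L$ or $\im P_R$ and depends only on the mixed component $T_0:=(1-P_L)(1-P_R)T$; indeed each of the other pieces commutes either with $a$ or with $b^0$. Thus the first-order condition $[[D_F,a],b^0]=0$ is equivalent to $[[T_0,a],b^0]=0$ for all $a,b$, and it remains to see that this forces $T_0=0$. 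Here I would use the explicit irreducible bimodules $M_v=\C^{n_v}\otimes\C^{m_v^\circ}$: on a block $\Hom(M_{s},M_{t})$ the component $T_0$ is a combination of operators of the form $X\otimes Y$ with $X,Y$ traceless, and the double commutator becomes $[X,a]\otimes[Y,c]$; requiring this to vanish for all $a,c$ forces $X$ or $Y$ to be scalar, hence (being traceless) zero. In diagrammatic terms this is exactly the statement that a first-order edge operator runs purely horizontally or vertically, so that no edge of mixed type occurs. Granting this, $(1-P_L)(1-P_R)D_F=0$ gives $D_F=D_0+\Delta+\tilde\Delta$ with $\tilde\Delta:=(1-P_R)P_L(D_F)$ the left-linear part.

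Next I would identify $\tilde\Delta$ with $J_F\Delta J_F^{-1}$. Since $J_F$ is an antilinear isometry with $J_F\,\pi(g)\,J_F^{-1}=(g^*)^0$, conjugation by $J_F$ interchanges the two actions and hence the averaging maps, $J_F P_L(T)J_F^{-1}=P_R(J_F T J_F^{-1})$ and conversely. Combined with $J_F D_F J_F^{-1}=\epsilon' D_F$ this yields $J_F\Delta J_F^{-1}=\epsilon'\tilde\Delta$, so that $D_F=D_0+\Delta+\epsilon' J_F\Delta J_F^{-1}$; the sign $\epsilon'$ equals $+1$ in the KO-dimensions relevant here (and can otherwise be absorbed into the normalization of $\Delta$), giving the stated form. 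Finally, the integral formula is immediate from the construction: since $gg^*=1$,
\[
-\int_{\U(A_F)} \pi(g)[\Delta,\pi(g)^*]\, d\mu(g)=-\int_{\U(A_F)}\big(\pi(g)\,\Delta\,\pi(g)^*-\Delta\big)\,d\mu(g)=\Delta-P_L(\Delta)=\Delta,
\]
using $P_L(\Delta)=0$ established above.
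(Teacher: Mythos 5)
Your argument is correct and, at bottom, follows the same two pillars as the paper's proof: the first-order condition to exclude a ``mixed'' component of $D_F$, and Haar averaging over $\U(A_F)$ (equivalently Peter--Weyl) for the integral formula. The difference is one of packaging. The paper works with the Krajewski-diagram decomposition $D_F=\sum_e D_e$ and simply \emph{quotes} the fact that a first-order operator splits into left- and right-$A_F$-linear pieces, then verifies the integral formula \eqref{eq:int-form} by computing matrix coefficients and invoking Peter--Weyl (the integral $\int g\Delta g^*\,d\mu$ vanishes because $\Delta$ intertwines inequivalent irreducibles). You instead introduce the projections $P_L,P_R$ and prove the splitting from scratch, after which \eqref{eq:int-form} collapses to the tautology $\Delta-P_L(\Delta)=\Delta$; this is more self-contained and does not presuppose the diagrammatic classification. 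One step of yours deserves tightening: on a block $\Hom(M_s,M_t)$ the mixed component $T_0$ is in general a \emph{sum} of simple tensors, so ``the double commutator is $[X,a]\otimes[Y,c]$, hence $X$ or $Y$ is scalar'' does not immediately dispose of it. The cleanest fix avoids blocks altogether: the first-order condition says $\ad_a(D_F)\in\im P_R$ for every $a$, and since $\ad_a$ commutes with $P_R$ (the left and right actions commute), this gives $\ad_a\bigl((1-P_R)D_F\bigr)=0$ for all $a$, i.e.\ $(1-P_R)D_F\in\im P_L$ and hence $(1-P_L)(1-P_R)D_F=0$. Finally, the sign in $\tilde\Delta=\epsilon' J_F\Delta J_F^{-1}$ is genuinely there; the paper's statement silently takes $\epsilon'=1$ (as in the KO-dimensions relevant to the examples treated later), so your remark about absorbing it is the right call.
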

\proof
If $\Gamma$ is the Krajewski diagram corresponding to $(A_F,H_F,D_F; \gamma_F,J_F)$ then we can decompose $D= \sum_e D_e$ with each $D_e: M_{s(e)} \to M_{t(e)}$. As before, write $M_v= \C^{n_v} \otimes \C^{m_v\circ}$ for all $v$. 
We denote by $D_0$ the sum of such operators for which $n_{s(e)} = n_{t(e)}$ and $m_{s(e)}= m_{t(e)}$. If this is not the case, then since any first-order operator such as $D_e$ splits into left $A_F$-linear and right $A_F$-linear components we have that either $n_{s(e)}=n_{t(e)}$ or $m_{s(e)} =m_{t(e)}$, respectively. It is clear that $D_F-D_0$ is the sum of these, so we define the right $A_F$-linear (or left $A_F^\circ$-linear) operator
$$
\Delta =  \sum_{e: m_{s(e)} =m_{t(e)}} D_e
$$
Then $J_F \Delta J_F^{-1}$ gives the remaining sum over all edges $e$ for which $n_{s(e)}=n_{t(e)}$, showing $ D_F= D_0 + \Delta + J_F \Delta J_F^{-1}$.

For the integral formula \eqref{eq:int-form}, we compute the matrix coefficients of the difference between the left and the right-hand side in Eq. \eqref{eq:int-form} in terms of a basis of the Hilbert space $H_F$. We decompose 
$$
H_F= \bigoplus_v M_v = \bigoplus_v \C^{n_v} \otimes \C^{m_v\circ} ; \qquad (\alpha = 1, \ldots,n_v,\beta =1 ,\ldots m_v)
$$
and write  $\{ e_v^{\alpha\beta} \}$ for the corresponding basis. Then,
$$
\left\langle e_{v_1}^{\alpha_1 \beta_1}, \int_{\U(A_F)} g \Delta g^* d \mu(g) e_{v_2}^{\alpha_2 \beta_2} \right\rangle  
= \int_{\U(A_F)}  \left\langle e_{v_1}^{\alpha_1 \beta_1}, g e_{w_1}^{\alpha_1'\beta_1'} \right\rangle  \Delta_{(w_2w_1)}^{\alpha_1'\alpha'_2}\delta^{\beta_1'\beta_2'}  \left\langle e_{w_2}^{\alpha_2'\beta'_2},  g^*  e_{v_2}^{\alpha_2 \beta_2} \right\rangle d \mu(g)
$$
where we sum over all repeated indices and where $\Delta_{(w_2w_1)}^{\alpha_1'\alpha'_2} \delta^{\beta_1'\beta_2'}$ are the matrix coefficients of (the right $A_F$-linear) $\Delta_{(w_2w_1)}: M_{w_2} \to M_{w_1}$. Next, 
$$
\left\langle e_{v_1}^{\alpha_1 \beta_1}, g e_{w_1}^{\alpha_1'\beta'_1} \right\rangle = \delta_{v_1,w_1} g^{\alpha_1\alpha_1'}_{v_1} \delta^{\beta_1\beta_1'}
$$
in terms of the {\it defining} matrix coefficients $g^{\alpha_1\alpha_1'}_{v_1}$ of $g \in \U(A_F)$ in the representation $\C^{n_{v_1}}$. Note that this is a representation of $\U(A_F)$, since $\U(A_F) \simeq \prod_i U(k_i, \F_i)$. 
This turns the above integral into
$$
\Delta_{(v_2 v_1)}^{\alpha_1'\alpha'_2} \delta^{\beta_1 \beta_2} \int_{\U(A_F)} g^{\alpha_1\alpha_1'}_{v_1} 
   \bar{g^{\alpha_2\alpha_2'}_{v_2}} d \mu(g)
= \Delta_{(v_2 v_1)}^{\alpha_1\alpha_2} \delta^{\beta_1 \beta_2} \delta_{v_1, v_2} 
$$
by the Peter--Weyl Theorem. However, $\Delta_{(v_2v_1)}$ maps between different irreducible representations of $\U(A_F)$ which implies the vanishing of the above expression and completes the proof.
\endproof

We will now formulate a condition on Krajewski diagrams that below will turn out to characterize renormalizable models. Let $\Gamma$ be a Krajewski diagram for $A_F$. We construct a graph $\tilde \Gamma$ whose vertex set $\tilde \Gamma^{(0)}$ is the set of inequivalent irreducible representations of $A_F$, or, of the Lie group $\U(A_F)$. In other words, $\tilde \Gamma^{(0)}$ is the set $\{ k_1, \ldots, k_N\}$ appearing in the decomposition \eqref{eq:wedderburn}. The set of edges for $\tilde \Gamma$ is defined as
$$
\tilde \Gamma^{(1)} := \{ (n ,n') : \exists e \in \Gamma^{(1)} \text{ such that } n_{s(e)} = n \text{ and }  n_{t(e)} = n' \}
$$
There is a map of graphs $\psi: \Gamma \to \tilde \Gamma$ defined as follows. For a vertex $v \in \Gamma^{(0)}$ we set $\psi(v) = n_v \in \tilde \Gamma^{(0)}$; for an edge $e \in \Gamma^{(1)}$ we set 
$$
\psi(e) = (n_{s(e)}, n_{t(e)})
%\left\{ \begin{array}{ll} (n_{s(e)}, n_{t(e)}) & \text{ if } m_{s(e)}= m_{t(e)}\\(m_{s(e)}, m_{t(e)}) & \text{ if } n_{s(e)}= n_{t(e)}\end{array} \right.
$$
Essentially, the graph $\tilde \Gamma$ is the projection of the Krajewski diagram $\Gamma$ onto the horizontal axis. By symmetry, $\tilde \Gamma$ is also the projection of $\Gamma$ onto the vertical axis; this corresponds to pre-composing $\psi$ with the graph automorphism $j: \Gamma \to \Gamma$.
 
Adopting the usual terminology from graph theory, we will refer to an edge with the same source and target vertex as a {\it loop}; a {\it cycle} is a path which begins and starts at the same vertex, but with no other repeated vertices ({\it i.e.} it does not contain loops). 

\begin{defn}
In the above notation, a {\rm lift along $\psi$} of a cycle $\gamma= \tilde e_1 \cdots  \tilde e_m$ of length $m$ in $\tilde \Gamma$ is a cycle $e_1 \cdots e_l$ of length $l\geq m$ such that the path $\psi(e_1) \cdots \psi(e_l)$ coincides with $\tilde \gamma$ modulo loops. 
\end{defn}
Figure \ref{fig:hor-lift} illustrates such a `horizontal lift' of graphs; similarly, we can define a vertical lift by using the map $\psi \circ j$.

%Note that such edges $e_i$ are always horizontal, but can be related to vertical edges via the graph automorphism $j$; in fact, we could have defined a lift along $\psi \circ j$ to obtain such a collection of vertical edges.
%In general, any cycle $\gamma$ in $\Gamma$ can be decomposed in a horizontal and vertical part, we write $\gamma= \gamma_H \ast \gamma_V$. Note that the individual $\gamma_H$ and $\gamma_V$ are just sets of edges and not necessarily cycles in $\Gamma$, as illustrated by the following example:
\begin{figure}[h!]
\begin{center}
\vspace{-15mm}
\begin{tabular}{c}
\begin{xy} 0;<3mm,0mm>:<0mm,3mm>::0;0,
,(5,-5)*\cir(0.3,0){}
,(10,-5)*\cir(0.3,0){}
,(5,-10)*\cir(0.3,0){}
,(10,-10)*\cir(0.3,0){}
,(5.2,-5);(9.8,-5)**\dir{-} ?>*\dir{>}
,(5.2,-10);(9.8,-10)**\dir{-}?<*\dir{<}
,(5,-5.2);(5,-9.8)**\dir{-} ?<*\dir{<}
,(10,-5.2
);(10,-9.8)**\dir{-} ?>*\dir{>}
,(15,-8.5)*+{\underset{\psi}{\leadsto}}
,(20,-7.5)*\cir(0.3,0){}
,(25,-7.5)*\cir(0.3,0){}
,(20.2,-7.5);(24.8,-7.5)**\crv{(22.5,-6)} ?>*\dir{>}
,(20.2,-7.5);(24.8,-7.5)**\crv{(22.5,-9)} ?<*\dir{<}
\end{xy}
\end{tabular}
\end{center}
\caption{The cycle in $\Gamma$ at the left-hand side is a lift along $\psi$ of the cycle in $\tilde \Gamma$ at the right-hand side (we have suppressed the loops at the two vertices).}
\label{fig:hor-lift}
\end{figure}
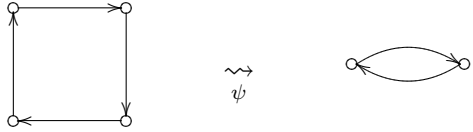

\begin{defn}
\label{defn:prop-R}
We say that a Krajewski diagram $\Gamma$ is {\rm R-connected in dimension $m$} if 
\begin{enumerate}
\item every cycle $\tilde \gamma$ in $\tilde \Gamma$ of length less than $m$ can be lifted along $\psi$ to a cycle $\gamma$ in $\Gamma$,
\item every two cycles $\tilde \gamma_1,  \tilde \gamma_2$ of total length less than $m$, which are not connected to a common vertex $1$ or $\bar 1$ in $\tilde \Gamma$, can be lifted to a single cycle $\gamma$ in $\Gamma$ along $\psi$ and $\psi \circ j$, respectively, {\it i.e.}
$$
\psi(\gamma) \sim \tilde\gamma_1; \qquad \psi(j(\gamma)) \sim \tilde \gamma_2
$$
where $\sim$ denotes equivalence of cycles in $\tilde \Gamma$ modulo loops.
%, such that $\gamma = \gamma_H \ast \gamma_V$ is a cycle in $\Gamma$, 
%\item for every pair of cycles $\tilde \gamma_1,  \tilde \gamma_2$ of total length $m$ connected to a common vertex $v = 1$ or $\bar 1$ in $\tilde \Gamma$, the concatenation $\tilde \gamma_v \ast_1 \tilde \gamma_2$ at that common vertex $v$ can be lifted to a cycle $\gamma$ in $\Gamma$ as in (1),
\item For $r \geq 3$, there are no tuples $\tilde \gamma_1, \ldots \tilde \gamma_r$ of cycles of total length $m$, which are not mutually connected to a common $1$ or $\bar 1$. 
\end{enumerate}
\end{defn}
Note that the last condition is trivially satisfied in the case $m \leq 4$, since every cycle has length at least 2. The case $m=4$ happens to be our case of interest.

\begin{prop}
\label{prop:kra-sm}
The Krajewski diagram of the Standard Model (Figure \ref{kra-sm}) is R-connected in dimension 4. 
\end{prop}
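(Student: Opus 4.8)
The plan is to check the three conditions of Definition~\ref{defn:prop-R} for $m=4$ directly against Figure~\ref{kra-sm}, after first observing that only the first condition has content in this dimension. Every cycle has length at least $2$, so any two cycles have total length at least $4$; hence there are \emph{no} two cycles of total length strictly less than $4$, and condition~(2) is vacuous. Likewise condition~(3) is vacuous, as already noted, since three cycles have total length at least $6$. Thus the entire statement reduces to verifying condition~(1): every cycle $\tilde\gamma$ in $\tilde\Gamma$ of length $2$ or $3$ lifts along $\psi$ to a cycle in $\Gamma$.

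The next step is to write $\tilde\Gamma$ down explicitly. Its vertices are the inequivalent irreducible representations of $A_F=\C\oplus\bH\oplus M_3(\C)$, namely $\{\mathbf 1,\mathbf 2,\mathbf{\bar 1},\mathbf 3\}$; here the two columns (and rows) labelled $\mathbf 1$ in Figure~\ref{kra-sm} collapse to a single vertex, as they only record a multiplicity. Projecting the edges of $\Gamma$ onto the horizontal axis by $\psi$, the horizontal edges in row $\mathbf 1^\circ$ (and, identically, in row $\mathbf 3^\circ$) produce exactly the two edges $\{\mathbf 2,\mathbf 1\}$ and $\{\mathbf 2,\mathbf{\bar 1}\}$ of $\tilde\Gamma$, whereas every vertical edge and the diagonal Majorana edge projects to a loop (at $\mathbf 1$, or at $\mathbf 3$). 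Discarding loops, $\tilde\Gamma$ is therefore the simple path $\mathbf 1 - \mathbf 2 - \mathbf{\bar 1}$ together with the isolated vertex $\mathbf 3$; in particular there is no edge joining $\mathbf 1$ to $\mathbf{\bar 1}$.

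I would then enumerate the short cycles. Since the underlying simple graph of $\tilde\Gamma$ is a forest (a path plus an isolated vertex), it contains no triangle, so there are no cycles of length $3$; the only cycles of length less than $4$ are the two length-$2$ cycles $\mathbf 1\to\mathbf 2\to\mathbf 1$ and $\mathbf 2\to\mathbf{\bar 1}\to\mathbf 2$. Each lifts immediately: the horizontal edge in $\Gamma$ joining the node at column $\mathbf 2$, row $\mathbf 1^\circ$ to the node at column $\mathbf 1$, row $\mathbf 1^\circ$, together with its reverse, is a length-$2$ cycle of $\Gamma$ whose $\psi$-image is precisely $\mathbf 1\to\mathbf 2\to\mathbf 1$ (no loops to discard), and the edge to the column-$\mathbf{\bar 1}$ node lifts $\mathbf 2\to\mathbf{\bar 1}\to\mathbf 2$ in the same way. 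This establishes condition~(1), and with it R-connectedness in dimension~$4$.

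There is no genuine obstacle beyond care in reading the diagram: the only points that need attention are recognising that the two identically labelled $\mathbf 1$-columns merge into one vertex of $\tilde\Gamma$ (contributing only loops and a single path-edge), that the color vertex $\mathbf 3$ is isolated and so plays no role, and that the vacuity of conditions~(2) and~(3) really does follow from the minimal cycle length being $2$. Once $\tilde\Gamma$ is correctly identified as a loop-decorated path, the lifting is automatic, because each of its two genuine edges already lies in the $\psi$-image of a single edge of $\Gamma$.
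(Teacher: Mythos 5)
Your identification of $\tilde\Gamma$ is correct (a path ${\bf 1}-{\bf 2}-{\bf \bar 1}$ plus loops and the isolated vertex ${\bf 3}$), and your treatment of the length-$2$ cycles and of condition (3) agrees with the paper. The gap is in how you dispose of the rest: you read ``total length less than $m$'' in Definition~\ref{defn:prop-R} strictly, which makes condition (2) vacuous for $m=4$ and reduces condition (1) to lengths $2$ and $3$. That reading cannot be the intended one. The whole purpose of R-connectedness is to guarantee that every gauge-invariant \emph{quartic} counterterm \eqref{eq:quartic} --- i.e.\ every cycle of length exactly $4$ and every pair of cycles of total length exactly $4$ in $\tilde\Gamma$ --- already occurs in the expansion \eqref{sa:quartic} of $\tr_F\Phi^4$; and Example~\ref{ex:not-R} exhibits a diagram declared \emph{not} R-connected in dimension $4$ precisely because the pair $\{({\bf 12})({\bf 21}),\,({\bf \bar 13})({\bf 3\bar 1})\}$, of total length $4$, fails to lift. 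Under your reading that example would be vacuously R-connected, contradicting the paper. So the conditions must be checked up to and including total length $4$, and that is exactly where all the content of Proposition~\ref{prop:kra-sm} sits; your argument establishes none of it.

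Concretely, the paper's proof does the work you skip: it verifies that the length-$4$ closed walks in $\tilde\Gamma$ (such as $({\bf 12})({\bf 21})({\bf 12})({\bf 21})$ and the walks mixing the two edges at ${\bf 2}$) lift along $\psi$ to cycles in Figure~\ref{kra-sm}, and it sorts the pairs of length-$2$ cycles of total length $4$ according to the common-vertex clause: the pair of two copies of $({\bf 12})({\bf 21})$ (and likewise of $({\bf \bar 12})({\bf 2\bar 1})$) shares the vertex ${\bf 1}$ (resp.\ ${\bf \bar 1}$) and is therefore \emph{excluded} from condition (2) --- which matters, since that pair would not lift --- while its length-$4$ concatenation is handled under condition (1). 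The remaining pair $\{({\bf 12})({\bf 21}),\,({\bf \bar 12})({\bf 2\bar 1})\}$ shares only ${\bf 2}$, so condition (2) does apply to it and a lift along $\psi$ and $\psi\circ j$ must be exhibited. None of these verifications appear in your proof, so as written it does not prove the proposition in the sense in which the paper uses it.
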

\proof
Indeed, the graph $\tilde \Gamma$ is given by
\begin{center}
\vspace{-10mm}
\begin{tabular}{c}
\begin{xy} 0;<3mm,0mm>:<0mm,3mm>::0;0,
,(11,-2.5)*{{\bf 2}}
,(17,-2.5)*{{\bf 1}}
,(23,-2.5)*{{\bf \bar 1}}
,(29,-2.5)*{{\bf 3}}
,(11,-5)*\cir(0.3,0){}
,(17,-5)*\cir(0.3,0){}
,(23,-5)*\cir(0.3,0){}
,(29,-5)*\cir(0.3,0){}
,(11.2,-5);(16.8,-5)**\dir{-}
%,(17.2,-5);(22.8,-5)**\dir{-}
,(11.1,-4.9);(22.9,-4.9)**\crv{(17,-2.5)}
\end{xy}
\end{tabular}
\end{center}
where we have suppressed the loops. Every cycle and every pair of distinguished cycles in $\tilde \Gamma$ of total length 4 can be lifted to a single cycle $\gamma$ in the Krajewski diagram of Figure \ref{kra-sm}. The pair consisting of two copies of the cycle $({\bf 12})({\bf 21})$ (going back-and-forth between ${\bf 1}$ and ${\bf 2}$) have a common vertex ${\bf 1}$. For this reason, they do not enter in Condition (2) (which, in fact, they would not satisfy). The concatenated cycle $({\bf 12})({\bf 21})({\bf 12})({\bf 21})$ (going back-and-forth twice between ${\bf 1}$ and ${\bf 2}$) is of length 4, and was already treated (cf. Condition (1)). A similar argument applies to the cycle $({\bf \bar 12})({\bf 2 \bar 1 })$.
\endproof

\begin{ex}
\label{ex:not-R}
Let us give an example of a Krajewski diagram which is not R-connected (in dimension 4). Consider 
\begin{center}
\vspace{-10mm}
%$\Gamma = \qquad $
\hspace{-5mm}\begin{tabular}{c}
\begin{xy} 0;<3mm,0mm>:<0mm,3mm>::0;0,
,(5,-2.5)*{{\bf 1}}
,(10,-2.5)*{{\bf 2}}
,(15,-2.5)*{{\bf \bar 1}}
,(20,-2.5)*{{\bf 3}}
,(1.5,-5)*{{\bf 1}^\circ}
,(1.5,-10)*{{\bf 2}^\circ}
,(1.5,-15)*{{\bf \bar 1}^\circ}
,(1.5,-20)*{{\bf 3}^\circ}
,(5,-5)*\cir(0.3,0){}
,(10,-5)*\cir(0.3,0){}
,(15,-5)*\cir(0.3,0){}
,(20,-5)*\cir(0.3,0){}
,(5.2,-5);(9.8,-5)**\dir{-}
,(10.2,-5);(14.8,-5)**\dir{-}
,(15.2,-5);(19.8,-5)**\dir{-}
,(5,-10)*\cir(0.3,0){}
,(5,-15)*\cir(0.3,0){}
,(5,-20)*\cir(0.3,0){}
,(5,-5.2);(5,-9.8)**\dir{-}
,(5,-10.2);(5,-14.8)**\dir{-}
,(5,-15.2);(5,-19.8)**\dir{-}
\end{xy}
\end{tabular}
$\qquad \leadsto \tilde \Gamma = $  
\begin{tabular}{c}
\begin{xy} 0;<3mm,0mm>:<0mm,3mm>::0;0,
,(5,0)*{{\bf 1}}
,(11,0)*{{\bf 2}}
,(17,0)*{{\bf \bar 1}}
,(23,0)*{{\bf 3}}
,(5,-2.5)*\cir(0.3,0){}
,(11,-2.5)*\cir(0.3,0){}
,(17,-2.5)*\cir(0.3,0){}
,(23,-2.5)*\cir(0.3,0){}
,(5.2,-2.5);(10.8,-2.5)**\dir{-}
,(11.2,-2.5);(16.8,-2.5)**\dir{-}
,(17.2,-2.5);(22.8,-2.5)**\dir{-}
\end{xy}
\end{tabular}
\end{center}
Indeed, the pair of cycles $\{ ({\bf 12} )({\bf 21} ), ({\bf \bar 1 3} )({\bf 3 \bar 1} )\}$ obtained by going back and forth along the left and right edge in $\tilde \Gamma$ does not lift to a cycle in $\Gamma$.
\end{ex}

\subsection{Gauge fields from AC manifolds}
Let us now describe how noncommutative manifolds naturally give rise to a gauge theory (cf. \cite[Section 10.8]{CM07}). 
For simplicity, we will restrict to almost commutative manifolds, so that $(\A,\H,D)$ will always denote $M \times F$, {\it i.e.}
$$
(\A,\H,D)= (C^\infty(M, A_F), L^2(M,\S) \otimes H_F, D_M \otimes 1 + \gamma_M \otimes D_F).
$$
for some finite spectral triple $F = (A_F,H_F,D_F;\gamma_F,J_F)$.

\begin{defn}
Denote by $\U(\A)$ the group of unitaries of $\A$. The {\rm gauge group} of $M \times F$ is given by
$$
\SU(\A) = \{  u \in \U(\A): {\det}_{F} u = 1\}
$$
where the determinant is taken pointwise in the representation $H_F$.
\end{defn}
The group $\SU(\A)$ acts naturally on the Dirac operator $D$ by conjugation, as well as on the representation of $\A$ on $\H$: $a \mapsto u a u^*$. If there is a real structure, then we transform 
$$
D \mapsto U D U^*,
$$ 
with $U = u u^{*\circ} \equiv u J u J^{-1}$. This suggests that we should rather take the image of $\SU(\A)$ under the map $u \mapsto u u^{* \circ}$. Indeed, in \cite{DS11} the gauge group was defined in this way, leading to a quotient of $\SU(\A)$ by an abelian group. Since in most of our examples the latter group will be finite, it will be ignored in what follows.

\begin{prop}
\label{prop:su}
Let $M \times F$ be an almost commutative manifold as above and write
$$
A_F = \bigoplus_{i=1}^N M_{k_i}(\F_i); \qquad (\F_i = \R,\C \text{ or }\bH).
$$
\begin{enumerate}
\item The gauge group of $M \times F$ is given by $\SU(\A) = C^\infty(M, \SU(A_F))$. 
\item The Lie algebra $\su(A_F)$ of $\SU(A_F)$ is isomorphic to
$$
\su(\A) \simeq  \bigoplus_{i=1}^N \su(k_i) \oplus \u(1) ^{\oplus(C-1)}
$$
where $C$ is the number of complex algebras in the above decomposition of $A_F$ and $\su(k_i)$ denotes $\o(k_i), \su(k_i)$ or $\sp(k_i)$ depending on whether $\F_i = \R,\C$ or $\H$, respectively.
\end{enumerate}
Consequently, there is a one-to-one correspondence between irreducible representations of the algebra $A_F$ and of the Lie algebra $\su(A_F)$ provided $A_F$ contains no copies of $\R$, and either no complex subalgebras, or at least one non-trivial (i.e. not $\C$) complex subalgebra. 
\end{prop}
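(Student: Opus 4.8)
The plan is to make the correspondence explicit via restriction of representations along the inclusion $\su(A_F) \hookrightarrow A_F$. By Wedderburn \eqref{eq:wedderburn}, the complex irreducible representations of $A_F$ are read off the simple summands: a summand $M_{k_i}(\R)$ or $M_{k_i}(\bH)$ contributes a single irreducible module (its complexified defining representation, $\C^{k_i}$ respectively $\C^{2 k_i}$), whereas a complex summand $M_{k_i}(\C)$, regarded as a \emph{real} algebra, contributes the defining module together with its conjugate, the two being distinguished by the action of the center $\C$. Part (2) presents $\su(A_F) = \bigoplus_i \g_i \oplus \u(1)^{\oplus(C-1)}$ with $\g_i \in \{\o(k_i), \su(k_i), \sp(k_i)\}$ sitting inside $M_{k_i}(\F_i)$ as the anti-Hermitian (and, in the complex case, trace-free) elements. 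I would take the candidate correspondence to send an irreducible $A_F$-module $V$ to $V$ viewed as an $\su(A_F)$-module, keeping formal track of the conjugation datum, and show it is a bijection onto the fundamental representations of $\su(A_F)$, i.e. the defining modules of its factors together with their conjugates.

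First I would verify that each restriction remains irreducible. The defining module of $M_{k_i}(\F_i)$ restricts to the vector (resp. fundamental) representation of $\g_i$, twisted by a character of the torus, and these are irreducible. This is exactly where the hypothesis ``no copies of $\R$'' enters: a summand $\R = M_1(\R)$ has $\g_i = \o(1) = 0$, so its module would degenerate to a trivial representation; excluding it ensures that every $\g_i$ occurring here acts irreducibly on its defining space.

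The substance of the proof is injectivity, and this is the step I expect to fight with. Representations supported on distinct non-abelian factors are separated at once, and for a complex summand with $k_i \geq 3$ the fundamental and antifundamental of $\su(k_i)$ are already inequivalent, so such a conjugate pair survives restriction. The delicate configurations are the modules attached to one-dimensional complex summands $\C = M_1(\C)$ (where $\g_i = \su(1)$ vanishes), together with the conjugate pairs attached to $\su(1)$ and the pseudoreal $\su(2)$ (where the defining module is self-conjugate). In all of these the modules can be told apart only by their $\u(1)$-charges, so I must show that the $C-1$ charges surviving the single determinant relation $\det_F u = 1$ separate them, and in particular never identify one module with the conjugate of another. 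The crux is a lattice computation: restricting the characters $z \mapsto \prod_a z_a^{n_a}$ of the torus $U(1)^{C}$ to the codimension-one subtorus cut out by $\det_F u = 1$ identifies two of them precisely when their charge vectors differ by a multiple of the multiplicity vector of the determinant; one checks this forces the charges to collapse onto a one- or two-dimensional sublattice, which happens exactly in the excluded configurations --- the model failures being $A_F = \C$ and $A_F = \C \oplus \C$ (and, more generally, any $A_F$ whose only complex summands are copies of $\C$). The dichotomy in the hypothesis --- either no complex summand at all, or at least one non-abelian complex summand --- is precisely what rules these out, the non-abelian complex factor supplying the extra charge direction that keeps the determinant relation from fusing conjugates.

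Finally, surjectivity onto the fundamental representations of $\su(A_F)$ is built into part (2): every simple factor $\g_i$ and every torus generator arises from a summand of $A_F$, so each such representation lies in the image. Assembling irreducibility, injectivity and surjectivity yields the stated bijection. To reiterate, the main obstacle is the abelian sector: verifying that the lone determinant constraint never fuses a representation with the conjugate of another once a genuinely non-abelian complex summand is present, the self-conjugacy of the $\su(2)$ fundamental being the borderline case that forces one to retain the formal conjugation datum throughout the bookkeeping.
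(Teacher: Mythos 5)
Your proposal and the paper's proof cover nearly disjoint parts of the statement. The paper's proof is devoted entirely to items (1) and (2): it notes that $\o(k_i)$ and $\sp(k_i)$ consist of traceless matrices, splits $\u(k_i)=\su(k_i)\oplus\u(1)$ in the complex case, and shows that the determinant condition, linearized to the unimodularity condition $\tr_{H_F}X=\sum_i\alpha_i\tr(X_i)=0$ with $\alpha_i$ the multiplicities of the fundamental representations in $H_F$, imposes exactly one linear relation $\sum_l\alpha_{i_l}z_{i_l}=0$ among the $C$ abelian charges, cutting $\u(1)^{\oplus C}$ down to $\u(1)^{\oplus(C-1)}$; the final ``Consequently'' clause is then asserted without argument. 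You do the opposite: you take (2) as given (``Part (2) presents\dots'') and spend all your effort on the correspondence of irreducibles. Relative to the statement as posed this is a genuine gap: you never derive the decomposition in (2), and in particular never establish that the determinant constraint is exactly one relation supported on the complex factors (the real and quaternionic summands dropping out because their Lie algebras are traceless). Since your injectivity argument hinges on the precise shape of that constraint --- you quotient the charge lattice of $U(1)^C$ by the multiplicity vector of ${\det}_F$ --- you are consuming the content of (2) rather than proving it; you should supply that short computation.

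On the part you do treat, your sketch goes beyond the paper (which offers no argument for the correspondence) and the strategy is sound: restriction is injective away from the abelian sector, and the only possible collisions are between characters differing by a multiple of the determinant's multiplicity vector, which the hypothesis on complex subalgebras is meant to exclude. Two cautions, though. The claim cannot literally be a bijection onto all irreducible representations of $\su(A_F)$ (a simple factor has infinitely many), so you should state it as injectivity of the restriction map, which is what your lattice argument actually addresses. More importantly, your own flagged worry about the self-conjugate $\su(2)$ fundamental is not resolved by ``keeping formal track of the conjugation datum'': for $A_F=M_2(\C)$ one has $C=1$, no residual $\u(1)$, and the defining module and its conjugate become genuinely equivalent $\su(2)$-modules even though the hypothesis of the proposition is satisfied. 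If you run the argument via charges you must say explicitly that a pseudoreal factor needs a companion charge direction to separate it from its conjugate, rather than leaving it as formal bookkeeping.
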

\proof
(1) is direct. (2). Note that $\u(A)$ is a direct sum of simple Lie algebras $\o(k_i), \u(k_i)$ and $\sp(k_i)$ according to $\F_i = \R,\C,\bH$, respectively. All these matrix Lie algebras have a trace, and we observe that the matrices in $\o(k_i)$ and $\sp(k_i)$ are already traceless. For the complex case, we can write $X_i \in \u(k_i)$ as $X_i= Y_i+z_i$ where $z_i = \tr X_i$, showing that:
$$
\u(k_i) = \su(k_i) \oplus \u(1).
$$
The determinant condition in the definition of $\SU(\A)$ translates at the infinitesimal level to the unimodularity condition $\tr_{H_F} X = 0$. Explicitly, this becomes
$$
\sum_i \alpha_i \tr (X_i) = 0
$$
where $\alpha_i$ are the multiplicities of the fundamental representations of $M_{k_i}(\F_i)$ appearing in $H_F$. Using the above property for the traces on simple matrix Lie algebras, we find that unimodularity is equivalent to
$$
\sum_{l=1}^C \alpha_{i_l} z_{i_l} = 0 
$$
where the sum is over the complex factors ({\it i.e.} for which $\F_i = \C$) in $A$, labeled by $i_1, \ldots, i_C$. This reduces the $C$ abelian factors to $C-1$ copies of $\u(1)$.
\endproof

\begin{ex}
For the Standard Model spectral triple of Example \ref{ex:sm-st} (cf. Example \ref{ex:kra-st}) this gives $\su(A_F) = \su(3) \oplus \su(2) \oplus \u(1)$, as desired (\cite[Proposition 2.16]{CCM07} or \cite[Proposition 1.185]{CM07}). 
\end{ex}

Now that we have found the gauge group of an almost commutative manifold, let us determine the gauge fields that $M \times F$ naturally gives rise to through the differential one-forms. 

\begin{prop}
\label{prop:splitting-omega}
The differential one-forms $\Omega^1_{D_M}(\A)$ on $M \times F$ allow for a direct sum decomposition:
$$
\Omega^1_D(\A) \simeq \Omega^1(M, A_F) \oplus C^\infty(M, \Omega^1_{D_F}(A_F)).
$$
where $\Omega^1(M, A_F) \equiv \Omega^1(M) \otimes A_F$. Moreover, the $A_F$-bimodule of differential one-forms $\Omega^1_{D_F}(A_F)$ is generated by $\Delta$. 
\end{prop}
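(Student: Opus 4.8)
The plan is to compute Connes' one-forms $\sum_k a_k[D,b_k]$ directly from the product structure of $D = D_M \otimes 1 + \gamma_M \otimes D_F$ and to read off the two summands. For $a,b \in \A = C^\infty(M, A_F)$ the Leibniz rule gives the splitting
$$
[D, b] = [D_M \otimes 1, b] + [\gamma_M \otimes D_F, b].
$$
First I would argue that $[D_M \otimes 1, b]$ is Clifford multiplication by the $A_F$-valued one-form $\dd b$, so that after multiplying on the left by $a$ and invoking the isomorphism $\Omega^1(M) \isom \Omega^1_{D_M}(C^\infty(M))$ recalled after the definition of Connes' forms, these terms span exactly $\Omega^1(M) \otimes A_F$; here one uses that $A_F \subset \A$ acts fibrewise on $H_F$ to produce arbitrary matrix coefficients. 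For the second term, $\gamma_M$ commutes with the (pointwise scalar) function part of $b$ and acts trivially on $H_F$, so $[\gamma_M \otimes D_F, b] = \gamma_M \otimes [D_F, b]$ understood fibrewise over $M$; multiplying on the left and summing, these terms span $\gamma_M \otimes C^\infty(M, \Omega^1_{D_F}(A_F))$. This already shows that $\Omega^1_D(\A)$ is the sum of the two stated modules.

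The point that requires care is that the sum is \emph{direct}. Here I would exploit the operator $\gamma_M \otimes 1$: since $\gamma_M$ anticommutes with each $\gamma^\mu$ but commutes with itself, every element of the first summand anticommutes with $\gamma_M \otimes 1$, whereas every element of the second summand commutes with it. An operator $\omega$ lying in both summands would therefore satisfy $\omega(\gamma_M \otimes 1) = -\omega(\gamma_M \otimes 1)$, and invertibility of $\gamma_M$ forces $\omega = 0$. (Equivalently, one may note that Clifford multiplication by one-forms lands in the degree-one part of the Clifford algebra while $\gamma_M$ is the top-degree element, and these are linearly independent for even $m \geq 2$.) This gives the claimed direct sum decomposition.

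For the last assertion I would use the splitting $D_F = D_0 + \Delta + J_F \Delta J_F^{-1}$ of Lemma \ref{lma:splitting-DF}. Since $D_0$ commutes with $A_F \otimes A_F^\circ$, and since conjugating the right-$A_F$-linearity of $\Delta$ by $J_F$ turns it into left-$A_F$-linearity --- concretely, $\Delta b^0 = b^0 \Delta$ for all $b$ implies $(J_F \Delta J_F^{-1})\, b^* = b^* (J_F \Delta J_F^{-1})$ after conjugation, using $J_F b^0 J_F^{-1} = b^*$ --- both $[D_0, b]$ and $[J_F \Delta J_F^{-1}, b]$ vanish for $b \in A_F$. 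Hence $[D_F, b] = [\Delta, b]$, and expanding $a_k[\Delta, b_k] = a_k \Delta b_k - (a_k b_k)\Delta$ shows that $\Omega^1_{D_F}(A_F)$ is contained in the $A_F$-bimodule generated by $\Delta$. The reverse inclusion is exactly the content of the integral formula \eqref{eq:int-form}: each integrand $g[\Delta, g^*]$ with $g \in \U(A_F)$ lies in $\Omega^1_{D_F}(A_F)$, and since the latter is a finite-dimensional (hence closed) linear subspace, the average $\Delta = -\int g[\Delta, g^*]\, d\mu(g)$ lies in it as well; as $\Omega^1_{D_F}(A_F)$ is a bimodule this yields the whole generated bimodule. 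I expect the main obstacle to be the bookkeeping in these $J_F$-conjugation and integration steps; once $[D_F,b]=[\Delta,b]$ and $\Delta \in \Omega^1_{D_F}(A_F)$ are in hand, the two inclusions close up immediately.
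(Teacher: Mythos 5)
Your proof is correct and follows essentially the same route as the paper: the splitting of $D$ yields the two summands, and the integral formula of Lemma \ref{lma:splitting-DF} together with the observation $[D_F,b]=[\Delta,b]$ for $b\in A_F$ gives both inclusions for the final claim. The only cosmetic difference is that you establish directness of the sum via (anti)commutation with $\gamma_M\otimes 1$, whereas the paper invokes Hilbert--Schmidt orthogonality of $\gamma^\mu$ and $\gamma_M$; the two arguments are equivalent and both valid.
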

\proof 
This follows directly from the splitting
$$
D = D_M \otimes 1 + \gamma_M \otimes D_F
$$
noting further that $\gamma_\mu$ and $\gamma_M$ are orthogonal with respect to the Hilbert--Schmidt inner product. 

The integral formula for $\Delta$ in Lemma \ref{lma:splitting-DF} combined with the observation that $[D,a] = [\Delta,a]$ for all $a \in A_F$ shows that $\Delta$ is already a one-form; this shows that $A_F \Delta A_F \subset \Omega^1_{D_F}(A_F)$. The same observation also shows that $\Omega^1_{D_F}(A_F) \subset A_F \Delta A_F$.
\endproof

Let us describe the linearly independent components of $\Omega^1_{D_F}(A_F)$; inspired by the discussion in Krajewski \cite{Kra97}. %Recall the construction of the graph $\tilde \Gamma$ from a Krajewski diagram $\Gamma$ in the previous section.

An element $\phi \in \Omega^1_{D_F}(A_F)$ is given by sums of elements of the form
$$
a \Delta b =  \sum_{e: m_{s(e)}=m_{t(e)}} a D_e b.
$$
Since some edges induce linear operators $D_e$ between the same representations of $A_F$, the above summands are not independent. In order to turn this into a sum over linearly independent terms, the graph $\tilde\Gamma$ introduced previously is quite convenient. Namely, given an edge $\tilde e$ in $\tilde \Gamma$ connecting different vertices, we consider the linear span $S_{\tilde e}$ in $\Hom(\C^{s(\tilde e)}, \C^{t(\tilde e)})$ of all matrices $D_e$ with $e \in \psi^{-1}(\tilde e)$. 
% for which $V_{s(e)} = \tilde V_1$ and $V_{t(e)} = \tilde V_2$. 
If $\{f_{\tilde e}^p \}_p$ ($p=1 ,\ldots, S_{\tilde e}$) is a basis for $S_{\tilde e}$ we can write 
\begin{equation}
\label{coeff-De}
D_e = \sum_p M_e^p f_{\psi(e)}^p, \qquad M_e^p \in \C.
\end{equation}
%where $\tilde e$ is the edge in $\tilde \Gamma$ corresponding to $e$ in $\Gamma$.
Note that the self-adjointness of $D$ implies that $M_{\bar e}^p = \bar M_e^p$ and $f_{\psi(\bar e)}^p = (f_{\psi(e)}^p)^*$. 
Adopting this form of $D_e$, we can write 
$$
a \Delta b = \sum_{\tilde e \in \tilde  \Gamma^{(1)}: s(\tilde e) \neq t(\tilde e)}
\sum_{p} \left(\sum_{\begin{smallmatrix} e \in \psi^{-1}(\tilde e)\\ m_{s(e)} = m_{t(e)} \end{smallmatrix}} M_e^p \right)
a f_{\tilde e}^p b.
$$
%The primed sum is further restricted over all edges $e$ in $\Gamma$ for which $m_{s(e)} =m_{t(e)}$, as these appear in the definition of $\Delta$. %, $V_{s(e)} \simeq \tilde V_1 $ and $V_{t(e)} \simeq \tilde V_2$, {\it i.e.} such that $\tilde e$ corresponds to $e$. 
We denote the independent fields by
$$
\phi^p_{\tilde e} = a f_{\tilde e}^p b : \qquad a,b \in A_F.
$$
which is an element in $\Hom(\C^{s(\tilde e)}, \C^{t(\tilde e)})$. Thus, we can write a general element $\phi \in \Omega^1_{D_F}(A_F)$ as
$$
\phi= \sum_{e: m_{s(e)}= m_{t(e)}} \sum_p M_e^p \phi_{\tilde e}^p.
$$
We conclude that the number of independent components of $\phi$ is
$$
\sum_{\tilde e: s(\tilde e) \neq t(\tilde e)} s(\tilde e) t(\tilde e) \dim S_{\tilde e} 
$$
A corresponding orthonormal basis (orthonormal with respect to the Hilbert--Schmidt norm on $\Omega^1_{D_F}(A_F) \subset \End H_F$) can be found by combining the indices $\tilde e$ and $p$ with the canonical bases of $\C^{s(\tilde e)}$ and $\C^{t(\tilde e)}$: we denote this orthonormal basis of $\Omega^1_{D_F}(\A_F)$ by $\{ e_I \}_I$. 

The vertices of $\tilde \Gamma$ label irreducible representations of $A_F$, and consequently of $\su(A_F)$.  Thus, the fields $\phi^p_{\tilde e}$  carry the induced representation, that is, by conjugation of the source and target representations $s(\tilde e)$ and $t(\tilde e)$. 

\begin{ex}

Consider the Krajewski diagram of the Standard Model (Figure \ref{kra-sm}). The fields that appear connect the vertices $2$ and $1$, and $2$ and $\bar 1$ in $\tilde \Gamma$: they carry the induced representation of $\u(1) \oplus \su(2)$. In fact, this is precisely the Higgs doublet in the electroweak model, having 2 independent degrees of freedom.
\end{ex}

Let us end this section by describing the inner fluctuations of the metric, induced by coupling $D$ to gauge fields in $\Omega^1_D(\A)$. 
The origin of this can also nicely be described in terms of Morita self-equivalences of the algebra $\A$ (cf. \cite[Sect. 10.8]{CM07}).

We consider a self-adjoint element $\omega +\gamma_M \phi \in \Omega^1_D(\A)$, in terms of the splitting in Proposition \ref{prop:splitting-omega}. The unimodularity condition on the gauge group is transferred to the gauge fields by demanding that $\tr_F \omega = 0$. Combining this with self-adjointness implies that $\omega \in \Omega^1(M, i\su(A_F))$. This allows for an inner fluctuation:
$$
D \leadsto D+ A + \gamma_M \Phi
$$
where 
$$
A = \omega + \epsilon'  J \omega J^{-1} = i \gamma^\mu \ad \omega_\mu ; \qquad \Phi = \phi + \epsilon' J \phi J^{-1}.
$$
These formulas can be checked using the splitting of Proposition \ref{prop:splitting-omega}. 

\begin{rem}
\label{rem:traces}
Note that a term such as $\tr_F \Phi^n$ (with the trace over $H_F$) can be easily computed from the Krajewski diagram. Indeed, it corresponds to a sum over cycles in $\Gamma$ of length $l$, for which the trace splits over a horizontal and vertical part:
$$
\tr_F \Phi^n = \sum_{\gamma= e_l \cdots e_1} \sum_{p_i} c_{p_1\ldots p_l}(\gamma) 
\tr_{s(\tilde e_{1})} \left( \phi^{p_l}_{\tilde e_l} \ldots \phi^{p_{1}}_{\tilde e_{1}} \right) 
\tr_{s(\tilde{ j(e_1)})} \left(\phi^{p_l}_{\tilde{j( e_l)}} \ldots \phi^{p_{1}}_{\tilde{ j(e_{1})}} \right)
$$
where we have denoted $\tilde e_i = \psi(e_i)$ and $\phi_{\tilde e}^p \equiv 1$ if $\tilde e$ is a loop in $\tilde \Gamma$ (i.e., if $s(\tilde e) = t(\tilde e)$).
%Essentially, we have split $\gamma$ into a horizontal and vertical part. 
The coefficient is given essentially by 
$$
c_{p_1\ldots p_n} (\gamma) \propto M_{e_1}^{p_1} \cdots M_{e_l}^{p_{l}}
$$
in terms of the basis coefficients of $D_e$ in Equation \eqref{coeff-De}. 
%If the Krajewski diagram is R-connected of Definition \ref{defn:prop-R}, then this can be reduced to a sum over loops $\tilde \gamma$ in $\tilde \Gamma$:
%$$
%\tr_F \Phi^n = \sum_{\tilde \gamma = \tilde e_1 \cdots \tilde e_n} \sum_{p_i} c'_{p_1\ldots p_n}(\tilde \gamma) \tr_{s(\tilde e_n)=t(\tilde e_1)} \phi^{p_1}_{\tilde e_1} \ldots \phi^{p_n}_{\tilde e_n} 
%$$
Moreover, the self-adjointness of $\Phi$ implies that the components $\phi_{\tilde e}^p$ satisfy:
$$
\phi_{\tilde {\bar e}}^p = (\phi_{\tilde e}^p)^* 
$$
where we recall that $\bar e$ is the edge $e$ with reversed orientation.
\end{rem}

This last Remark and its relation to the notion of R-connectedness of Definition \ref{defn:prop-R} will play a crucial role in the subsequent discussion on renormalization of the gauge field theories that correspond to $M \times F$, which we will now define.

\section{Spectral action for almost commutative manifolds}
\label{sect:sa}
Starting with an almost commutative manifold $M \times F$ with Krajewski diagram $\Gamma$ for $F$, we have now set the stage for a gauge theory on $M$. Summarizing, we have derived:
\begin{enumerate}
\item a gauge group $\SU(\A)= C^\infty(M,\SU(A_F)$ with reductive (local) gauge algebra $\su(A_F)$,
\item gauge fields $A$ in the adjoint representation of this gauge group,
\item scalar fields $\Phi$, with independent components $\phi_{\tilde e} \in \Hom( V, W)$ with $V$ and $W$ irreducible representations of $\SU(A_F)$, parametrized by the vertices $s(\tilde e)$ and $t(\tilde e)$ in the graph $\tilde \Gamma$.
\end{enumerate}
We search for gauge invariant action functionals. The most simplest, manifestly gauge invariant one is given the trace of a function of the fluctuated Dirac operator \cite{CC96, CC97}:
$$
S[A,\Phi] := \tr f\left(\frac{D+A+\gamma_5 \Phi}{\Lambda}\right),
$$
together with a real cut-off parameter $\Lambda$. 
Locally, we have
$$
D+A = i \gamma^\mu ( \nabla_\mu^S +  A_\mu).
$$
with $\nabla_\mu^S$ the spin connection on a Riemannian spin manifold $M$ and $A_\mu$ a skew-hermitian traceless matrix. The field $\Phi$ is considered as a self-adjoint element in $C^\infty(M, \End H_F)$. 

For simplicity, we take $M$ to be flat ({\it i.e.} vanishing Riemann curvature tensor) and 4-dimensional; we therefore write $\gamma_5 \equiv \gamma_M$. 
Furthermore, we will assume that $f$ is a suitable Laplace transform:
$$
f(x) = \int_{t>0} e^{-tx^2} g(t) dt.
$$

\begin{prop}[\cite{CC97}]
In the above notation, there is an asymptotic expansion (as $\Lambda \to \infty$):
\begin{equation}
\label{sa-eym}
S[A,\Phi]  \sim \sum_{m \geq 0} \Lambda^{4-m} f_{4-m} \int_M a_m (x,(D+A + \gamma_5 \Phi)^2),
\end{equation}
in terms of the Seeley--De Witt invariants of $(D+A + \gamma_5 \Phi)^2$.
The coefficients are defined by $f_k := \int t^{-k/2} g(t)dt$.
\end{prop}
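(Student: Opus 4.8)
The plan is to reduce the statement to Gilkey's heat-kernel asymptotic expansion for Laplace-type operators, exploiting the Laplace-transform representation of $f$. Write $D' = D + A + \gamma_5 \Phi$ and $P = D'^2$. First I would observe that $D'$ is a self-adjoint first-order elliptic operator on $L^2(M,\S) \otimes H_F$, so that $P$ is of Laplace type: its leading symbol equals $g^{\mu\nu}\xi_\mu \xi_\nu$ times the identity, a consequence of the Clifford relations obeyed by the $\gamma^\mu$ together with $\gamma_5^2 = 1$. This is precisely what guarantees that $e^{-sP}$ is trace-class for every $s>0$ and admits the standard small-$s$ asymptotics. Self-adjointness of $D'$ also legitimizes the identity $(D'/\Lambda)^2 = P/\Lambda^2$ used below.

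The computation then proceeds as follows. Inserting the Laplace transform $f(x) = \int_{t>0} e^{-tx^2} g(t)\, dt$ gives
$$ S[A,\Phi] = \tr \int_{t>0} e^{-t P/\Lambda^2} g(t)\, dt = \int_{t>0} \tr\!\big( e^{-tP/\Lambda^2}\big)\, g(t)\, dt, $$
the interchange of trace and integral being justified by the trace-class property of $e^{-tP/\Lambda^2}$ and Fubini. Next I would invoke Gilkey's theorem: for a Laplace-type operator on a compact $4$-manifold one has the small-$s$ expansion $\tr(e^{-sP}) \sim \sum_{m\geq 0} s^{(m-4)/2} \int_M a_m(x,P)$, with $a_m(x,P)$ the Seeley--De Witt coefficients (the odd ones vanishing; flatness of $M$ merely simplifies their explicit form). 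Setting $s = t/\Lambda^2$ yields
$$ \tr\!\big(e^{-tP/\Lambda^2}\big) \sim \sum_{m\geq 0} \Lambda^{4-m}\, t^{(m-4)/2} \int_M a_m(x,P). $$
Substituting back and interchanging the asymptotic sum with the $t$-integral produces
$$ S[A,\Phi] \sim \sum_{m\geq 0} \Lambda^{4-m}\left(\int_{t>0} t^{(m-4)/2} g(t)\, dt\right) \int_M a_m(x,P), $$
and since $\int_{t>0} t^{(m-4)/2} g(t)\, dt = \int_{t>0} t^{-(4-m)/2} g(t)\, dt = f_{4-m}$ by the definition $f_k := \int t^{-k/2} g(t)\, dt$, this is exactly the claimed expansion.

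The main obstacle is the final interchange: Gilkey's expansion is asymptotic rather than convergent, so exchanging the infinite sum with the integral over $t$ is not automatic, and one must control the truncation error. I would handle this by writing $\tr(e^{-sP}) = \sum_{m < N} s^{(m-4)/2} \int_M a_m(x,P) + R_N(s)$ with a uniform bound $|R_N(s)| \le C_N\, s^{(N-4)/2}$ for small $s$, and then imposing mild decay and moment conditions on $g$ so that $\int_{t>0} t^{(N-4)/2} g(t)\, dt$ converges. This ensures the truncated remainder contributes only at order $\Lambda^{4-N}$, so that the full expression is genuinely asymptotic as $\Lambda \to \infty$ and the coefficient of each power $\Lambda^{4-m}$ is unambiguously $f_{4-m}\int_M a_m(x,P)$.
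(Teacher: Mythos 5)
Your argument is correct and is precisely the standard heat-kernel derivation underlying the citation to \cite{CC97}; the paper itself states this proposition without proof. The Laplace-transform substitution, Gilkey's small-time expansion of $\tr(e^{-sP})$ for the Laplace-type operator $P=(D+A+\gamma_5\Phi)^2$ with $s=t/\Lambda^2$, and the identification of the coefficient $\int_{t>0} t^{(m-4)/2}g(t)\,dt = f_{4-m}$ are exactly the intended steps, and your closing remark on controlling the truncation remainder via moment conditions on $g$ correctly isolates the only genuine technical point in making the expansion rigorous.
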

Recall that the Seeley--De Witt coefficients $a_m(x,(D+A+\gamma_5 \Phi)^2)$ are gauge invariant polynomials in the fields $A_\mu$ and $\Phi$. Indeed, the Weitzenb\"ock formula gives
\begin{equation}
\label{eq:weitzenbock}
(D+A + \gamma_5 \Phi)^2 =- \nabla_\mu \nabla^\mu- \frac{1}{2} \gamma^\mu \gamma^\nu F_{\mu\nu} - \gamma_5 [D+A,\Phi] + \Phi^2
\end{equation}
in terms of the curvature $F_{\mu\nu} = \partial_\mu A_\nu - \partial_\nu A_\mu + [A_\mu,A_\nu]$ of $A_\mu$ and $\nabla_\mu= \partial_\mu + A_\mu$. Consequently, a Theorem by Gilkey \cite[Theorem 4.8.16]{Gil84} shows that (in this case) $a_m$ are polynomial gauge invariants in $F_{\mu\nu}$ and the endomorphisms
$$
E  = \frac{1}{2} \gamma^\mu \gamma^\nu F_{\mu\nu} + \gamma_5 [D+A,\Phi] - \Phi^2
$$
as well as their covariant derivatives (with respect to the connection $A_\mu$). The {\it order} $\ord$ of $a_m$ is $m$, where we set on generators:
$$
\ord A_{\mu_1; \mu_2\cdots \mu_k} = k; \qquad \ord \Phi_{;\mu_1 \cdots \mu_k} = k+1.
$$
Consequently, the curvature $F_{\mu\nu}$ has order $2$, and $F_{\mu_1 \mu_2; \mu_3 \cdots \mu_k}$ has order $k$. For example, $a_4(x,D_A^2)$ consists of terms proportional to $\tr_F F_{\mu\nu}F^{\mu\nu}$ and $\tr_F ( (\nabla_\mu \Phi)^2 + \Phi^4)$. Moreover, $a_{m} = 0$ for all odd $m$. In fact, we have:
%and more generally:
%$$
%a_{4+2k}(x,D_A^2) =- c_k \tr F_{\mu\nu} \Delta^k_A (F^{\mu\nu}) + \cO(F^3)
%$$
%for some constants $c_k$ and the Laplacian $\Delta_A= -(\partial_\mu + A_\mu)^2$ (see also \cite{Avr99} and references therein). 
%The remainder is of third and higher order in $F$, plus its covariant derivatives, adding up to an order equal to $4+2k$.
\begin{thm}
\label{thm:CC}
The spectral action for the almost commutative manifold $M \times F$ is given, asymptotically as $\Lambda \to \infty$, by
\begin{multline*}
S[A] = \frac{f_4 N \Lambda^4}{2 \pi^2} \Vol(M) - \frac{f_2\Lambda^2}{2 \pi^2}\int_M \tr_F \Phi^2
+ \frac{f(0)}{8 \pi^2} \int_M \tr_F \left( (\nabla_\mu \Phi)^2 + \Phi^4 \right)\\
  - \frac{f_0}{24 \pi^2} \int_M \tr_F F_{\mu \nu} F^{\mu\nu}  + \cO(\Lambda^{-1})
\end{multline*}
where $N = \dim \H_F$.
\end{thm}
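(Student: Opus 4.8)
The plan is to substitute the fluctuated Laplacian $P := (D+A+\gamma_5\Phi)^2$ into the asymptotic expansion \eqref{sa-eym} and to evaluate the Seeley--De Witt coefficients $a_0, a_2, a_4$, which carry the prefactors $\Lambda^4 f_4$, $\Lambda^2 f_2$ and $f_0$ respectively. Since $\dim M = 4$ and $a_m = 0$ for odd $m$, the first omitted term is $a_6$ at order $\Lambda^{-2}$, which sits inside the remainder $\cO(\Lambda^{-1})$; hence only these three coefficients contribute at the stated order.

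First I would cast $P$ in the Laplace form $P = -(\nabla_\mu\nabla^\mu + E)$ required by Gilkey's formulas. The Weitzenb\"ock identity \eqref{eq:weitzenbock} already does this: the relevant connection is $\nabla_\mu = \partial_\mu + A_\mu$, and $E = \half\gamma^\mu\gamma^\nu F_{\mu\nu} + \gamma_5[D+A,\Phi] - \Phi^2$. Because $M$ is flat, the spin-connection curvature vanishes, the scalar curvature is zero, and the purely gravitational invariants in $a_2, a_4$ drop out; moreover the bundle curvature $\Omega_{\mu\nu}$ entering $a_4$ reduces to $F_{\mu\nu}$ acting on $H_F$. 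Gilkey's universal formulas (Theorem 4.8.16 of \cite{Gil84}) then reduce, up to total derivatives that integrate to zero on the closed manifold $M$, to $a_0 \propto \tr\mathrm{Id}$, $a_2\propto \tr E$, and $a_4 \propto \tr(180\,E^2 + 30\,\Omega_{\mu\nu}\Omega^{\mu\nu})$.

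The computational heart is the evaluation of these three traces, which factorize as $\tr = \tr_S \otimes \tr_F$ over the four-dimensional spinor space and over $H_F$. I would use the Clifford trace identities $\tr_S\gamma^\mu\gamma^\nu = 4 g^{\mu\nu}$ and $\tr_S(\gamma^\mu\gamma^\nu\gamma^\rho\gamma^\sigma) = 4(g^{\mu\nu}g^{\rho\sigma} - g^{\mu\rho}g^{\nu\sigma} + g^{\mu\sigma}g^{\nu\rho})$, the vanishing of $\tr_S$ of any odd product of $\gamma$'s and of $\gamma_5$ against fewer than four $\gamma$'s, and the identity $[D+A,\Phi] = i\gamma^\mu\nabla_\mu\Phi$ with $\nabla_\mu\Phi = \partial_\mu\Phi + [A_\mu,\Phi]$. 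For $a_0$ this produces the volume term. For $a_2$, the $F_{\mu\nu}$ piece vanishes against the symmetric $\tr_S\gamma^\mu\gamma^\nu$ and the $\gamma_5$ piece vanishes by $\tr_S\gamma_5\gamma^\mu = 0$, leaving only the mass term proportional to $\tr_F\Phi^2$. For $a_4$ I would expand $E^2$ and verify that every cross-term is killed by the same identities, so that $\tr E^2$ collapses to its three survivors, producing $-2\tr_F F_{\mu\nu}F^{\mu\nu}$, $4\tr_F(\nabla_\mu\Phi)^2$ and $4\tr_F\Phi^4$.

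The step needing the most care is the Yang--Mills coefficient: $\tr_F F_{\mu\nu}F^{\mu\nu}$ is fed both by the $\tr E^2$ term and by the separate curvature term $\tr\Omega_{\mu\nu}\Omega^{\mu\nu}$, and these must be combined with Gilkey's relative weights $180$ and $30$ to produce the single normalization $-f_0/24\pi^2$. Collecting the survivors power by power in $\Lambda$, and recording that $f(0) = f_0 = \int_{t>0} g(t)\,dt$, then assembles exactly the four displayed terms together with the remainder $\cO(\Lambda^{-1})$.
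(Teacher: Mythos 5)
Your proposal is correct and follows exactly the route the paper itself indicates: Theorem \ref{thm:CC} is stated without a printed proof (it is the result of \cite{CC97,CCM07}), and the material immediately preceding it --- the heat-kernel expansion \eqref{sa-eym}, the Weitzenb\"ock identity \eqref{eq:weitzenbock}, and Gilkey's Theorem 4.8.16 --- is precisely the computation you carry out, with the correct spinor traces, the vanishing of all cross-terms in $\tr E^2$, and the correct combination of the $180\,E^2$ and $30\,\Omega_{\mu\nu}\Omega^{\mu\nu}$ contributions producing the $-f_0/24\pi^2$ Yang--Mills coefficient. The one caveat is a normalization issue internal to the paper rather than to your argument: with the paper's own definition $f_k=\int t^{-k/2}g(t)\,dt$ your computation yields $f_4N\Lambda^4/4\pi^2$ and $-f_2\Lambda^2/4\pi^2$ (consistent with the Corollary following Theorem \ref{thm:free-sa}), whereas the displayed theorem carries the factors $1/2\pi^2$ coming from the moment conventions of \cite{CCM07}.
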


From Remark \ref{rem:traces} it follows that we have in terms of the --- now $x$-dependent --- components $\phi_{\tilde e}^p$ of $\Phi$:
\begin{equation}
\label{sa:mass}
\int \tr_F \Phi^2 = \sum_{e,p} c_{p_1p_2}(e \bar e)  \int \tr_{s(\tilde e)}   (\phi_{\tilde e}^{p_1})^\ast \phi_{\tilde e}^{p_2}.
\end{equation}
Similarly, 
\begin{equation}
\label{sa:kinetic}
\int \tr_F (\nabla_\mu\Phi)^2 = \sum_{e,p_1,p_2} c_{p_1p_2}(e \bar e)  \int \tr_{s(\tilde e)}  (\nabla_\mu \phi_{\tilde e}^{p_1} )^*\nabla^\mu \phi_{\tilde e}^{p_2} 
\end{equation}
and finally, in terms of a sum over cycles in $\Gamma$:
\begin{align}
\label{sa:quartic}
\int \tr_F \Phi^4 &=
 \sum_{\gamma = \bar e_1 \bar e_2 e_2 e_1} \sum_{p_i} c_{p_1\ldots p_4}(\gamma) \int \tr_{s(\tilde e_{1})} (\phi^{p_4}_{\tilde e_1})^* (\phi^{p_3}_{\tilde e_2})^*   \phi^{p_{2}}_{\tilde e_{2}}  \phi^{p_{1}}_{\tilde e_{1}} \\ \nn
& \quad +\sum_{ \gamma = \bar{j(e_1)}j(e_1) \bar e_2 e_2}\sum_{p_i} c_{p_1\ldots p_4}(\gamma) \int \tr_{s(\tilde e_{2})} (\phi^{p_4}_{\tilde e_2})^* \phi^{p_{2}}_{\tilde e_{2}}   \tr_{s(\tilde e_1)} (\phi^{p_1}_{\tilde e_1})^* \phi^{p_{1}}_{\tilde e_{1}}  \\ \nn
& \quad +\sum_{ \gamma = \bar{j(e_1)} \bar e_2 j(e_1) e_2}\sum_{p_i} c_{p_1\ldots p_4}(\gamma) \int \tr_{s(\tilde e_{2})} (\phi^{p_4}_{\tilde e_2})^* \phi^{p_{2}}_{\tilde e_{2}}   \tr_{s(\tilde e_1)} (\phi^{p_1}_{\tilde e_1})^* \phi^{p_{1}}_{\tilde e_{1}}  ,
\end{align}
where $e_1, e_2$ are horizontal edges in $\Gamma$. 
These expressions will become useful later on.

\bigskip

The appearance of the Yang--Mills action and Higgs-like potential for $\Phi$ at lowest order in the spectral action on $M \times F$ is the main motivation to study this model. As a matter of fact, if we take $F$ to be described by Figure \ref{kra-sm} and choosing the $D_e$ to correspond to the physical Yukawa matrices,  then one derives in this way the full Standard Model of elementary particles, including the spontaneous symmetry breaking potential for the Higgs field \cite{CCM07, CM07}. 

In the present paper, we aim at a better understanding also of the higher-order terms in the asymptotic expansion of the spectral action and in particular the role they play as regulators of the quantum gauge theory defined at lowest order. The free part of $S[A,\Phi]$ is by definition the part of $S[A,\Phi]$ that is quadratic in the fields
\begin{equation}
\label{eq:free-sa}
S_0[A,\Phi] = \frac{1}{2} \frac{\dd}{\dd u}  \frac{\dd}{\dd v} \left( S[uA+vA,\Phi]+S[A,u\Phi+v\Phi] \right)  \bigg|_{u=v=0}.
\end{equation}
\begin{thm}
\label{thm:free-sa}
There is the following asymptotic expansion (as $\Lambda \to \infty$) for the free part of the spectral action on a flat background manifold $M$
$$
S_0[A,\Phi] \sim - \sum_{k \geq0} (-1)^k  f_{-2k} \Lambda^{-2k}  \bigg( c_k \int \tr_F \hat F^{\mu\nu} \Delta^k ( \hat F_{\mu\nu}) +  c_k'  \int \tr_F (\partial_\mu \Phi) \Delta^k (\partial_\mu \Phi) \bigg)
$$
where $\Delta$ is the Laplacian on $(M,g)$, $\hat F_{\mu\nu} = \partial_\mu A_\nu - \partial_\nu A_\mu$ and $c_k,c_k'$ are the following positive constants:
$$
c_k =  \frac{1}{8 \pi^2} \frac{(k+1)!}{(2k+3)(2k+1)!};\qquad c_k' = \frac{1}{8 \pi^2} \frac{k!}{(2k+1)!}.
$$
\end{thm}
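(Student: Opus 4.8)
The plan is to extract the quadratic part of $\tr f(D_A/\Lambda)$, with $D_A = D+A+\gamma_5\Phi$, by combining the Laplace-transform representation of $f$ with a second-order Duhamel expansion of the heat semigroup, and then to diagonalize the spacetime operators by Fourier transform on the flat manifold $M$. Writing $f(x)=\int_{t>0}e^{-tx^2}g(t)\,dt$ gives $S[A,\Phi]=\int_{t>0}\tr\big(e^{-tD_A^2/\Lambda^2}\big)g(t)\,dt$, so the whole computation reduces to the field-dependence of the heat trace of $D_A^2$.

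Setting $s=t/\Lambda^2$, $P=D^2$ and $D_A^2=P+B$ with $B=\{D,\,A+\gamma_5\Phi\}+(A+\gamma_5\Phi)^2$, I split $B=B_1+B_2$ into the parts linear and quadratic in the fields. Duhamel's formula together with cyclicity of the trace gives the quadratic-in-fields contribution to $\tr e^{-s(P+B)}$ as
$$-s\,\tr\big(B_2\,e^{-sP}\big)+\int_0^s\!\!\int_0^{u_1}\tr\big(B_1\,e^{-(u_1-u_2)P}B_1\,e^{-(s-u_1+u_2)P}\big)\,du_2\,du_1,$$
the first being the seagull term and the second the one-loop bubble. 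Because $\{D_M,\gamma_5\Phi\}=i\gamma_5\gamma^\mu(\partial_\mu\Phi)$ is odd in $\gamma_5$ while the gauge insertion $\{D_M,A\}$ is even, the Clifford trace annihilates all $A$--$\Phi$ cross terms, so $S_0$ splits as a pure gauge part plus a pure scalar part, matching the two summands in the statement.

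Passing to momentum space on flat $M$, the operator $P$ acts on the spacetime factor as the scalar Laplacian, whose heat kernel Fourier-transforms to a Gaussian $e^{-s p^2}$, while the finite-dimensional $\tr_F$ and the Clifford traces factor out. The $s$-integration then turns the two terms above into elementary Gaussian momentum integrals over $\R^4$ together with, for the bubble, a single Feynman-parameter integral $\int_0^1 x^a(1-x)^b\,dx$ produced by rescaling the Duhamel variables $u_1,u_2$ by $s$. For the scalar field the bubble yields a function $\Sigma(p^2/\Lambda^2)$ multiplying $\tr_F\Phi(-p)\Phi(p)$; expanding $\Sigma$ in powers of $p^2/\Lambda^2$, using $\int t^{k}g(t)\,dt=f_{-2k}$ and translating $p^2\leftrightarrow\Delta$, produces exactly the scalar tower with coefficients $c_k'$, while the momentum-independent seagull reproduces, at order $\Lambda^2$, the Higgs mass term already found in Theorem \ref{thm:CC}.

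The gauge sector is the delicate part. Here the seagull $-s\tr(A^2e^{-sP})$ and the bubble must be added before the longitudinal, non-transverse pieces cancel; gauge invariance of $S[A,\Phi]$ guarantees that the resulting vacuum-polarization tensor is transverse, $\Pi_{\mu\nu}(p)=(p^2 g_{\mu\nu}-p_\mu p_\nu)\,\pi(p^2/\Lambda^2)$, which is precisely the momentum-space form of $\tfrac12\hat F^{\mu\nu}\hat F_{\mu\nu}$. The main obstacle is thus twofold: organizing the Clifford and momentum algebra so that this transversality is manifest rather than merely invoked, and evaluating the Feynman-parameter and Gaussian integrals at each order so that the universal constants come out to exactly $c_k=\frac{1}{8\pi^2}\frac{(k+1)!}{(2k+3)(2k+1)!}$. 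Expanding $\pi$ in powers of $p^2/\Lambda^2$, inserting the factors $(-1)^k f_{-2k}\Lambda^{-2k}$ and substituting $p^2\leftrightarrow\Delta$ then assembles the gauge tower and completes the asymptotic expansion.
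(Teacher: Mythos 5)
Your approach is essentially the paper's own: the paper offers no proof of Theorem~\ref{thm:free-sa} beyond citing \cite{Sui11c} for the gauge tower and asserting the scalar tower "can be derived along the same lines", and the method of that reference is precisely your Laplace-transform representation of $f$ followed by a second-order Duhamel expansion of the heat trace evaluated in momentum space on flat $M$. The plan is sound, and the two items you defer as "the main obstacle" do close as you predict; since the entire quantitative content of the theorem is the constants, it is worth noting that they drop out of your setup with no further ideas needed. For instance, in the scalar bubble the rescaled Duhamel variables give $\frac{1}{(4\pi s)^2}\, s^2\cdot\tfrac12\int_0^1 e^{-s x(1-x)p^2}\,dx$ against the Clifford trace $4g^{\mu\nu}$ of the two insertions $-i\gamma_5\gamma^\mu\partial_\mu\Phi$, and the $k$-th Taylor coefficient $\tfrac{1}{k!}\int_0^1 x^k(1-x)^k\,dx=\tfrac{k!}{(2k+1)!}$ combined with $\int t^k g(t)\,dt=f_{-2k}$ and the prefactor $\tfrac{4}{(4\pi)^2}\cdot\tfrac12=\tfrac{1}{8\pi^2}$ yields exactly $c_k'$; the gauge sector is the same Beta-function computation once the seagull is added, and transversality can be made manifest by an integration by parts in the Feynman parameter rather than merely invoked. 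Two small remarks: by the paper's definition \eqref{eq:free-sa} of $S_0$ the $A$--$\Phi$ cross terms are excluded by construction, so your $\gamma_5$-parity argument is a consistency check rather than a required step; and the $\Lambda^2\tr_F\Phi^2$ seagull you correctly identify is genuinely present but absent from the displayed formula of the theorem (it resurfaces only in the corollary and in Theorem~\ref{thm:CC}), which is an inconsistency of the paper's statement, not of your argument.
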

The free Yang--Mills part was obtained in \cite{Sui11c}. The free contribution for the scalar field $\Phi$ can be derived along the same lines. Let us check the lowest order terms appearing in the above formula for $S_0[A]$ with the Yang--Mills action appearing in \cite{CCM07} (cf. Theorem \ref{thm:CC} above).
\begin{corl}
Modulo negative powers of $\Lambda$, we have
\begin{align*}
S_0 [A,\Phi] &\sim  - \frac{f_2}{4 \pi^2} \int_M  \tr_F \Phi^2 + \frac{1}{8 \pi^2}  f(0) \tr_F ( \partial_\mu \Phi)( \partial^\mu \Phi)- \frac{f_0}{24 \pi^2} \int_M  \tr_F \hat F^{\mu\nu}  \hat F_{\mu\nu} + \cO(\Lambda^{-1}).
\end{align*}
\end{corl}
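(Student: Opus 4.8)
The plan is to obtain the corollary as the lowest-order truncation of Theorem~\ref{thm:free-sa}, augmented by the unique non-derivative quadratic term furnished by Theorem~\ref{thm:CC}. Recall that $S_0[A,\Phi]$, as defined in \eqref{eq:free-sa}, is precisely the part of the spectral action that is quadratic in the fields. Theorem~\ref{thm:free-sa} already records the complete asymptotic expansion of the \emph{derivative} (kinetic) contributions to $S_0$, arranged as a sum over $k\geq 0$ carrying an overall factor $\Lambda^{-2k}$. Hence discarding all terms of order $\Lambda^{-1}$ and lower is the same as retaining only the $k=0$ summand, and the first step is simply to isolate that summand.

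Evaluating the universal constants at $k=0$ gives
$$
c_0 = \frac{1}{8\pi^2}\frac{1!}{3\cdot 1!} = \frac{1}{24\pi^2}, \qquad c_0' = \frac{1}{8\pi^2}\frac{0!}{1!} = \frac{1}{8\pi^2},
$$
so that, using $\Delta^0 = \id$, the $k=0$ piece contributes $-f_0 c_0\int_M\tr_F \hat F^{\mu\nu}\hat F_{\mu\nu} = -\tfrac{f_0}{24\pi^2}\int_M\tr_F \hat F^{\mu\nu}\hat F_{\mu\nu}$ for the gauge field and $-f_0 c_0'\int_M\tr_F(\partial_\mu\Phi)(\partial^\mu\Phi)$ for the scalar. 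To put these in the form stated in the corollary I would invoke the identity $f(0) = f_0$, which is immediate from the Laplace-transform presentation $f(x) = \int_{t>0}e^{-tx^2}g(t)\,dt$ evaluated at $x=0$ against $f_0 = \int t^0 g(t)\,dt$.

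It remains to supply the mass term. Being non-derivative, it is outside the scope of Theorem~\ref{thm:free-sa}, and I would read it off directly from Theorem~\ref{thm:CC}: the term $\propto \Lambda^2\int_M\tr_F\Phi^2$ there is already quadratic in $\Phi$ and therefore descends verbatim into $S_0$. The one point needing care is to confirm that no other contribution survives at quadratic order: expanding $(\nabla_\mu\Phi)^2 = (\partial_\mu\Phi+[A_\mu,\Phi])^2$, the curvature $F_{\mu\nu} = \hat F_{\mu\nu}+[A_\mu,A_\nu]$, and $\Phi^4$ from Theorem~\ref{thm:CC}, one checks that every additional monomial is of cubic or higher order in the fields, while the cosmological term $\propto\Vol(M)$ carries no fields at all; thus none of them feeds into $S_0$. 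Assembling the surviving gauge, kinetic and mass pieces --- and tracking the metric signature, which governs the relative signs between $\hat F^{\mu\nu}\hat F_{\mu\nu}$ and $(\partial_\mu\Phi)(\partial^\mu\Phi)$ --- yields the asserted expression, the signature bookkeeping being the only genuinely delicate step.
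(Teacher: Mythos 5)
Your proposal is correct and takes essentially the same route the paper intends: isolate the $k=0$ summand of Theorem~\ref{thm:free-sa}, evaluate $c_0=\tfrac{1}{24\pi^2}$ and $c_0'=\tfrac{1}{8\pi^2}$, identify $f(0)=f_0$ from the Laplace-transform presentation, and import the non-derivative mass term from Theorem~\ref{thm:CC} after checking that no other monomial there is quadratic in the fields. The paper offers no argument beyond exactly this comparison, so nothing is missing from your attempt.
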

We see that $S_0[A,\Phi]$ is the usual (free part of the) Yang--Mills action and a free scalar field action. In fact, we can write more concisely
$$
S_0[A,\Phi] \sim  - \frac{f_2}{4 \pi^2} \int_M  \tr_F \Phi^2 +\int_M \tr_F ( \partial_\mu \Phi) \vartheta_\Lambda(\Delta)( \partial^\mu \Phi)
- \int \tr_F \hat F_{\mu\nu} \varphi_\Lambda(\Delta) (\hat F^{\mu\nu}) 
$$
in terms of the following expansions (in $\Lambda$):
\begin{align*}
\varphi_\Lambda(x) &:= %(f_0 c_0)^{-1} 
\sum_{k \geq 0 } (-1)^k \Lambda^{-2k} f_{-2k} c_k x^k;\\
\vartheta_\Lambda(x) &:= %(f_0 c_0')^{-1} 
\sum_{k \geq 0 } (-1)^k \Lambda^{-2k} f_{-2k} c_k' x^k.
\end{align*}

This form motivates the interpretation of $S_0[A,\Phi]$ (and of $S[A,\Phi]$) as a higher-derivative gauge theory. As we will see below, this indeed regularizes the theory in such a way that $S[A ,\Phi]$ defines a superrenormalizable field theory. This comes with the usual intricacies of gauge theories with spontaneously symmetry breaking. Before proceeding with a gauge fixing and renormalization, we discuss the Higgs potential for $\Phi$. 

\subsection{Higgs mechanism and higher derivatives}
Given the above Higgs-like form of the spectral action at lowest order in the asymptotic expansion, it is natural to expand the scalar field $\Phi$ around its vacuum expectation value 
%Consider the potential term for the scalar field $\Phi$:
%$$
%V[\Phi] := S[A=0,\Phi]|_{\partial_\mu \Phi = 0} = \frac{1}{4 \pi^2} \sum_{k \geq 0} (-%1)^k \frac{f_{4-2k}}{k!} \int_M \tr_F \Phi^{2k}
%$$
%We assume that $\Phi$ attains a minimum for $V$ when 
$\langle \Phi \rangle_0 = v$, which we assume to be constant. We write 
$$
\Phi = v + \chi
$$
and refer to the fluctuations $\chi$ as the Higgs field. The constant vacuum expectation value $v$ will appear in $S[A,\Phi]$ as generating mass terms for the Higgs and the gauge field; this is spontaneous symmetry breaking (which might also not occur when $v=0$). Since the asymptotically expanded spectral action is considered as a higher-derivative theory, the interpretation of mass terms is not so straightforward. Still, we can asymptotically expand the free part of $S[A,\Phi]$ as above:
\begin{align*}
S_0[A,\Phi] & \sim \frac{1}{2}\int_M \tr_F ( \partial_\mu \chi) \vartheta_\Lambda(\Delta)( \partial^\mu \chi) + \frac{1}{2} \int_M \tr_F \chi \vartheta'_\Lambda(\Delta;v)( \chi)\\
& \quad +\frac{1}{2} \int_M (\partial_\mu A^a_\nu) \varphi_\Lambda(\Delta) (\partial^\mu A^{a\nu} - \partial^\nu A^{a\mu})  + A^{a\mu}\varphi'_\Lambda(\Delta;v)^{ab} ({A^b}_\mu)
\end{align*}
We now in addition have terms involving expansions $\vartheta'_\Lambda$ and $\varphi'_\Lambda$ which --- as $\vartheta_\Lambda$ and $\varphi_\Lambda$ do --- start with a differential operator of degree $0$ ({\it i.e.} a mass term). Besides derivatives, they also involve a series expansion in $v$.

In addition to the above free part, the splitting $\Phi=v+\chi$ induces terms in $S[A,\Phi]$ that are linear in both $A$ and $\chi$ (and in $v$):
\begin{equation}
\label{eq:cross-terms}
\int_M  \tr_F (\partial_\mu \chi)   \varpi_\Lambda(\Delta;v) ([A^\mu, v]) 
%- \int_M \tr_F(\partial_\mu \chi)   \varphi^R_\Lambda(\Delta;v) ( v A^\mu) 
\end{equation}
where we have as above a expansion defined by:
$$
\varpi_\Lambda(x;v)= %(f_0 c_0'')^{-1} 
\sum_{k \geq l \geq 0 } (-1)^k \Lambda^{-2k} f_{-2k} b_{k,l}(v) x^{k-l}
$$
and $b_{k,l}(v)$ acts (pointwise) on $\End H_F$ and is of order $2l \leq 2k$ in $v$. We also write the components of $\varpi_\Lambda$ in terms of the basis $\{e_I\}_I$ of $\Omega^1_{D_F}(\A_F) \subset \End H_F$ introduced in the previous section:
$$
\varpi_\Lambda(x;v) = \left( \varpi_\Lambda(x;v)_{IJ} \right)
$$
With a slight abuse of notation, we write $e_0$ for the identity in $\End H_F$, normalized to have Hilbert--Schmidt norm equal to 1.

For convenience, we introduce the following inner product:
\begin{equation}
\label{eq:inprod}
(\phi_1, \phi_2) = \int_M \tr_F \phi_1^* \varpi_\Lambda(\Delta;v) (\phi_2).
\end{equation}
on endomorphisms $\phi_1,\phi_2 \in C^\infty(M, \End H_F)$. Thus, the above term \eqref{eq:cross-terms} reads $(\partial_\mu \chi,[A^\mu, v]) $.

\section{$R_\xi$-gauge fixing}
\label{sect:gf}
We add a $R_\xi$-type gauge-fixing term with higher-derivatives of the following form:
\begin{align}
\label{sa-gf}
S_\gf[A,\Phi] &\sim  \frac{1}{2 \xi}  \int \tr_F \left( \partial_\mu A^{a\mu} - \xi \chi [T^a, v] \right) \varpi_\Lambda(\Delta; v) \left( \partial_\nu A^{a\nu} - \xi\chi [T^a, v] \right)%\nn \\
%&\quad- \frac{1}{2 \xi}  \int \tr_N \left( \partial_\mu A^{a\mu} + \xi \chi v T^a \right) \varphi^R_\Lambda(\Delta; v^2) \left( \partial_\nu A^{a\nu} +\xi\chi v T^a\right) 
\end{align}
which is chosen so that the terms linear in both $A$ and $\chi$ cancel the cross-terms of \eqref{eq:cross-terms}. In terms of the inner product \eqref{eq:inprod} we have more concisely:
$$
S_\gf[A,\Phi] = \frac{1}{2 \xi} \left( \partial_\nu A^{a\nu} - \xi \chi [T^a, v], \partial_\mu A^{a\mu} - \xi \chi [T^a, v] \right)
$$
where we consider $\partial_\mu A^{a\mu}(x)$ as an endomorphism of $H_F$ ({\it i.e.} as a multiple of the identity).

As usual, the above gauge fixing requires a Jacobian, conveniently described by a Faddeev--Popov ghost Lagrangian:
\begin{equation}
\label{sa-gh}
S_\gh[A,\bar C,C,\Phi] = \left(\bar C^a , \Delta C^a - \partial_\mu[A^\mu,C]^a -\xi [C, \Phi] [T^a,v]\right)
\end{equation}
Here $C,\bar C$ are the Faddeev--Popov ghost fields which are $\g$-valued fermionic fields: $C= C^a T^a$ and $\bar C = \bar C^a T^a$. Accordingly, $[C,\Phi] := C^a [T^a,\Phi]$.

\begin{prop}
The sum $S[A,\Phi] + S_\gf[A,\Phi] + S_\gh[A,\bar C, C,\Phi]$ is invariant under the BRST-transformations:
\begin{gather}
sA_\mu = \partial_\mu C + [A_\mu,C];\qquad s\Phi = -[C, \Phi]; \nn \\
s C = -\half [C,C]; \qquad s \bar C^a =  \frac{1}{\xi} \partial_\mu A^{a\mu} - \chi[T^a,v].
\label{brst}
\end{gather}
\end{prop}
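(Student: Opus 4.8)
The plan is to split the verification into three independent facts, exploiting the standard mechanism that makes $R_\xi$-type gauge fixings BRST-invariant: the unmodified spectral action is gauge invariant, the BRST operator squares to zero on the fundamental fields, and the combined gauge-fixing and ghost terms are BRST-exact. Throughout, $s$ is extended from its values on the generators \eqref{brst} to a graded derivation commuting with $\partial_\mu$.

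First I would record that $s$ acts on $A_\mu$ and $\Phi$ exactly as an infinitesimal gauge transformation with the Grassmann-odd ghost $C$ in the role of the gauge parameter: $sA_\mu = \partial_\mu C + [A_\mu,C]$ is the covariant derivative of $C$, and $s\Phi = -[C,\Phi]$ is the infinitesimal adjoint rotation. Since $S[A,\Phi] = \tr f((D+A+\gamma_5\Phi)/\Lambda)$ is invariant under conjugation of the fluctuated Dirac operator by $U = u u^{*\circ}$, it is in particular invariant under the infinitesimal gauge transformations of the previous section; substituting the constant parameter by $C$ and using the chain rule yields $sS[A,\Phi]=0$ with no further work. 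I would then establish off-shell nilpotency $s^2=0$ on $A_\mu$, $\Phi$ and $C$: this is the usual computation, in which $s^2 C = 0$ and $s^2 A_\mu = 0$ reduce to the Jacobi identity in $\su(A_F)$ together with the graded Leibniz rule (keeping in mind that $C$ is odd, so $[C,C]\neq0$), and $s^2\Phi=0$ follows identically since $\Phi$ is in the adjoint. To make nilpotency hold strictly off-shell on the whole field content I would introduce the Nakanishi--Lautrup auxiliary field $B^a$, setting $s\bar C^a = B^a$ and $sB^a = 0$.

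With nilpotency in hand, the heart of the proof is to exhibit $S_\gf + S_\gh$ as $s\Psi$ for the gauge-fixing fermion
$$
\Psi = \left(\bar C^a,\; \partial_\mu A^{a\mu} - \xi\chi[T^a,v] - \tfrac{\xi}{2} B^a \right),
$$
with $(\cdot,\cdot)$ the higher-derivative pairing of \eqref{eq:inprod}. Writing $\mathcal{G}^a := \partial_\mu A^{a\mu} - \xi\chi[T^a,v]$ and applying $s$ as a graded derivation, the $s\bar C^a = B^a$ contribution produces $(B^a,\mathcal{G}^a) - \tfrac{\xi}{2}(B^a,B^a)$, while the term in which $s$ hits the second slot reproduces the ghost Lagrangian, using $\Delta = -\partial_\mu\partial^\mu$ and the identity $s\mathcal{G}^a = -(\Delta C^a - \partial_\mu[A^\mu,C]^a - \xi[C,\Phi][T^a,v])$. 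Because $S + s\Psi$ is $s$-exact up to the manifestly invariant $S$, and $s^2=0$, the total action is BRST-closed; integrating out $B^a$ through its algebraic equation of motion $B^a = \tfrac1\xi\mathcal{G}^a$ then returns precisely $S_\gf + S_\gh$ together with the quoted on-shell transformation $s\bar C^a = \tfrac1\xi\partial_\mu A^{a\mu} - \chi[T^a,v]$.

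The step I expect to require the most care is not the commutator algebra but the interaction of the BRST derivation with the non-local, $v$-dependent operator $\varpi_\Lambda(\Delta;v)$ buried in the pairing $(\cdot,\cdot)$. For the derivation identities above to hold I must check that $\varpi_\Lambda(\Delta;v)$ is self-adjoint with respect to $\int_M\tr_F(\cdot)$, so that the pairing is genuinely symmetric and the integration by parts of the powers of $\Delta$ is unobstructed, and that it is BRST-inert, which holds since it is built from the gauge-invariant Laplacian and the constant background $v$ (so $sv=0$). Granting this, $s$ passes through $\varpi_\Lambda(\Delta;v)$ as an ordinary derivation and the cancellation proceeds order by order in the asymptotic expansion in $\Lambda^{-1}$. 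The only remaining delicate point is the consistent tracking of Grassmann signs when commuting $s$, $\bar C$ and $B$ past one another in $s\Psi$.
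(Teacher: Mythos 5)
Your proposal is correct, but it takes a genuinely different route from the paper's proof. The paper argues head-on: it notes $s(S)=0$ by gauge invariance (exactly as you do), and then simply computes $s(S_\gf)$ and $s(S_\gh)$ separately with the on-shell transformation $s\bar C^a = \tfrac1\xi\partial_\mu A^{a\mu}-\chi[T^a,v]$, observing that $s(S_\gf)=\tfrac1\xi\bigl(\mathcal G^a, s\mathcal G^a\bigr)$ is, up to boundary terms, minus $s(S_\gh)=\bigl(s\bar C^a, -s\mathcal G^a\bigr)$; the cancellation is read off directly. You instead introduce the Nakanishi--Lautrup field at the outset, prove off-shell nilpotency on $A_\mu,\Phi,C$, exhibit $S_\gf+S_\gh$ as $s\Psi$ for a gauge-fixing fermion, and integrate out $B^a$. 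This is precisely the content of the remark the paper places \emph{after} the proposition (where it notes $s^2\neq0$ on $\bar C$ and records $\Psi=(\bar C^a,\tfrac12\xi h^a+\partial_\mu A^{a\mu}-\xi\chi[T^a,v])$ with $s\bar C=-h$, matching your $\Psi$ up to the sign convention $B=-h$); you have promoted that remark to the proof itself. Your route buys conceptual transparency and makes explicit a hypothesis the paper uses only tacitly --- namely that $s^2\mathcal G^a=0$, which is needed to justify dropping the variation of the second slot of $S_\gh$ --- at the cost of the extra auxiliary-field bookkeeping; the paper's computation is shorter but leaves that step implicit. Your attention to the pairing $(\cdot,\cdot)$ is also warranted (the paper hides it in ``modulo vanishing boundary terms''), though note that the $\Delta$ inside $\varpi_\Lambda(\Delta;v)$ is the ordinary flat Laplacian, which is BRST-inert because $s$ commutes with $\partial_\mu$, not because it is gauge-covariant.
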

\begin{proof}
First, $s(S)=0$ because of gauge invariance of $S[A,\Phi]$. Indeed, $sA_\mu$ and $s \Phi$ are just gauge transformations by the (fermionic) field $C$. 

For the gauge fixing and ghost terms, we compute 
\begin{align*}
s(S_\gf) &= \frac{1}{\xi} \left(\partial_\mu A^{a\mu} - \xi \chi [T^a, v] , -\Delta C^a + \partial_\mu[A^\mu,C]^a + \xi [C,\Phi][T^a,v] \right)
\intertext{since $s \chi = s(v+\chi) \equiv s \Phi$. On the other hand,}
s(S_\gh) &=  \left( \xi^{-1} \partial_\mu A^{a\mu} - \chi[T^a,v] , \Delta C^a - \partial_\mu[A^\mu,C]^a - \xi [C, \Phi] [T^a,v]\right)
\end{align*}
which modulo vanishing boundary terms is minus the previous expression.
\end{proof}
Note that $s^2 \neq 0$, which can be cured by standard homological methods: introduce an auxiliary (aka Nakanishi-Lautrup) field $h$ so that $\bar C$ and $h$ form a contractible pair in BRST-cohomology. In other words, we replace the above transformation in Equation \eqref{brst} on $\bar C$ by $s \bar C = -h$ and $s h = 0$. If we replace $S_\gf + S_\gh$ by $s \Psi$ with $\Psi$ an arbitrary {\it gauge fixing fermion}, it follows from gauge invariance of $S$ and nilpotency of $s$ that $S + s \Psi$ is BRST-invariant. The above special form of $S_\gf+ S_\gh$ can be recovered by choosing
$$
\Psi = ( \bar C^a, \half \xi h^a + \partial_\mu A^{a \mu} - \xi \chi[T^a,v]).
$$

We derive the propagators by inverting the non-degenerate quadratic forms in the fields $A$, $\xi$ and $C$ given by $S_0[A,\Phi] + S_\gf[A,\xi]$. 
%This can be done provided $\varphi_\Lambda, \vartheta_\Lambda$ and $\varpi_\Lambda$ are non-vanishing, which can be guaranteed by assuming that $f^{(2k)}(0) \geq 0$.
%This expression is meaningful if $f^{(2k)}(0) \geq 0$ for all $k \geq 0$. 
%Indeed, by \cite[Lemma 3]{Sui11c} we have that $f_{-2k} = (-1)^k f^{(2k)}(0)/(2k-1)!!$, so that under this assumption the summands appearing in $\varphi_\Lambda, \vartheta_\Lambda$ and $\varpi_\Lambda$ are all positive. 
This yields for the {\it gauge propagator}:
\begin{multline*}
D_{\mu\nu}^{ab}(p,v; \Lambda) = \left[ g_{\mu\nu} - \frac{p_\mu p_\nu}{p^2}\right] \left( \frac{1}{p^2 \varphi_\Lambda(p^2) + \varphi'_\Lambda(p^2;v)} \right)^{ab}  \\
+ \xi  \frac{p_\mu p_\nu}{p^2} \left( \frac{1}{p^2 \varpi_\Lambda(p^2;v)_{00} + \xi \varphi'_\Lambda(p^2;v) } \right)^{ab}.
\end{multline*}
The {\it Higgs propagator} becomes:
$$
D^{IJ}(p,v;\Lambda) = \left(\frac{1}{p^2 \vartheta_\Lambda(p^2) + \vartheta'_\Lambda(p^2;v) + \xi \mu(p^2;v) } \right)^{IJ} 
%+ \xi [T^a,v][T^b,v] \frac{1}{p^2} 
$$
where 
$$
\mu(p^2;v)_{IJ} := \tr_F e_I [T^a,v] \varpi_\Lambda(p^2;v) e_J[T^a,v]
$$
The {\it ghost propagator} is
$$
\tilde D^{ab}(p,v; \Lambda) = \frac{\delta^{ab}}{p^2 \varpi_\Lambda(p^2;v)_{00}}.
$$
%For the moment, these propagators are expansions in $\Lambda$. 

\begin{rem}
In \cite{Sui11c} we argued that for the pure Yang--Mills system the function $\varphi_\Lambda(p^2)$ appearing in the denominator of the propagator is nowhere-vanishing, provided we impose the conditions $f^{(2k)}(0) \geq 0$ on the even derivatives of $f$. Consequently, the gauge propagator did not have other poles than a physical pole at $p^2 =0$. In the present case, where we allow for spontaneous symmetry breaking, such a conclusion can not be drawn. Typically, there will be unphysical poles (involving $\xi$) appearing in the gauge and also in the Higgs and ghost propagators. Since we will be mainly concerned with renormalizability in this paper, we will ignore these poles in what follows. Of course, a treatment of (the lack of) unitarity for this higher-derivative theory does require a careful analysis of these unphysical poles as well. At lowest order (as in Theorem \ref{thm:CC}), one expects to find a cancellation of the unphysical poles appearing in the gauge and Higgs propagator, similar to \cite{Hoo71b}.
\end{rem}

\section{Renormalization on an almost commutative manifold}
\label{sect:pc}
As said, we consider the asymptotic expansion (as $\Lambda \to \infty$) of the spectral action on the AC manifold $M \times F$ as a higher-derivative field theory. This means that we will use the higher derivatives of $F_{\mu\nu}$ and $\Phi$ that appear in the asymptotic expansion as natural regulators of the theory, similar to \cite{Sla71,Sla72b} (see also \cite[Sect. 4.4]{FS80}). However, note that the regularizing terms are already present in the asymptotic expansion of the spectral action and need not be introduced as such. 
Let us consider the expansion of Theorem \ref{thm:free-sa} up to order $n$ (which we assume to be at least $4$), {\it i.e.} we set $f_{4-m} = 0$ for all $m > n$ while $f_4, \ldots f_{4-n} \neq 0$. Also, assume a gauge fixing of the form \eqref{sa-gf} and \eqref{sa-gh}. 

\begin{rem}
Note that for $n=4$ the asymptotically expanded spectral action is given by the action appearing in \ref{thm:CC} and strictly speaking not a higher-derivative gauge theory. However, in what follows it is convenient to also consider the case $n=4$, giving us the physically interesting Lagrangian.
\end{rem}

Then, we easily derive from the structure of $\varphi_\Lambda(p^2), \vartheta_\Lambda(p^2)$ and $\varpi_\Lambda(p^2;v)$ that the propagators of the gauge field, the Higgs field $\chi$, and the ghost field, respectively behave as $|p|^{-n+2}$ as $|p| \to \infty$. Indeed, in this case:
\begin{gather*}
\varphi_\Lambda(p^2) = \sum_{k=0}^{n/2-2} \Lambda^{-2k} f_{-2k} c_k p^{2k}; \quad
\vartheta_\Lambda(p^2) = \sum_{k=0}^{n/2-2} \Lambda^{-2k} f_{-2k} c_k' p^{2k}; \\
\varpi_\Lambda(p^2) = \sum_{k=0}^{n/2-2} \Lambda^{-2k} f_{-2k} c_k''(v) p^{2k}
\end{gather*}
which behave like $|p|^{n-4}$ as $|p| \to \infty$. Moreover, $\vartheta'_\Lambda(p^2;v)$ and $\varphi'_\Lambda(p^2;v)$ are subleading in $|p|$ since they behave as $v^2 |p|^{n-2}$ as $|p| \to \infty$.

Let us now consider the weights on the vertices in a Feynman graph (not to be confused with a Krajewski diagram). For gauge-Higgs interactions involving $i$ gauge and $j$ Higgs fields, the maximal number of derivatives is $n-i-j$, essentially because the total order of the corresponding term in the Lagrangian is less than or equal to $n$. Similarly, for the gauge-ghost interaction, the maximal number of derivatives is $n-3$. Finally, the Higgs-ghost interaction behaves slightly better and has $\leq n-4$ derivatives. We adopt the following notation:
\begin{center}
\begin{tabular}{|l|l||l|l|}
\hline
& number of ...& & number of ...\\
\hline\hline
$I_A$& internal gauge lines & $V_{ij}$ & gauge-Higgs vertices \\
$I_\chi$& internal Higgs lines & $\tilde V_A$ & gauge-ghost vertices\\
$\tilde I$ & internal ghost lines & $\tilde V_\chi$ & Higgs-ghost vertices\\
\hline
\end{tabular}
\end{center}
%We indicate the number of internal gauge, Higgs and ghost lines by $I^A, I^\chi, \tilde I$; the number of vertices with $i$ gauge and $j$ Higgs fields is denoted by $v_{ij}$, the number of gauge-ghost vertices is $\tilde v^A$ and finally the number of Higgs-ghost vertices is $\tilde v^\chi$.

Let us now find an expression for the {\it superficial degree of divergence} $\omega$ of a Feynman graph. In 4 dimensions, we find in terms of the above notation at loop order $L$:
$$
\omega \leq 4L - I_A(n-2) - I_\chi(n-2) - \tilde I (n-2) + 
\sum_{i+j=3}^n V_{ij} (n-i - j) + \tilde V_A (n-3)+ \tilde V_\chi (n-4).
$$
\begin{lma}
Let $E_A,E_\chi$ and $\tilde E$ denote the number of external gauge, Higgs and ghost edges, respectively. The superficial degree of divergence of the Feynman graph satisfies
$$
\omega \leq (4-n)(L-1) + 4 - (E_A+E_\chi+\tilde E).
$$
\end{lma}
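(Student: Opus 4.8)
The plan is to derive the bound on $\omega$ from the given inequality for the superficial degree of divergence by eliminating the internal line counts $I_A, I_\chi, \tilde I$ and the vertex counts in favour of the loop order $L$ and the numbers of external edges $E_A, E_\chi, \tilde E$. This is a standard power-counting bookkeeping argument, resting on two topological identities for Feynman graphs: the relation between loops, internal lines, and vertices coming from the Euler characteristic, and the edge-counting relations that balance the half-edges (legs) meeting at each vertex against internal and external lines.

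First I would record the Euler-type identity $L = I - V + 1$, where $I = I_A + I_\chi + \tilde I$ is the total number of internal lines and $V$ is the total number of vertices (counting all $V_{ij}$, $\tilde V_A$, and $\tilde V_\chi$). Next I would write down the half-edge counting: at each vertex a certain number of legs of each field type emanate, and every internal line consumes two legs while every external edge consumes one. Summing over all vertices and separating by field type relates the external counts $E_A, E_\chi, \tilde E$ and the internal counts $I_A, I_\chi, \tilde I$ to the vertex multiplicities weighted by how many gauge/Higgs/ghost legs each vertex carries. For a gauge-Higgs vertex $V_{ij}$ there are $i$ gauge legs and $j$ Higgs legs; each gauge-ghost vertex $\tilde V_A$ carries one gauge and two ghost legs; each Higgs-ghost vertex $\tilde V_\chi$ carries one Higgs and two ghost legs.

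The heart of the computation is to substitute these relations into the starting inequality so that the coefficient of $n$ is organized cleanly. I would group the terms: the internal-line contribution $-(I_A+I_\chi+\tilde I)(n-2) = -I(n-2)$, and the vertex contribution, where each vertex of total valence $i+j$ (or $3$ for the ghost vertices) contributes $(n - i - j)$. Writing $n - i - j = (n-2) - (i+j-2)$ lets me split off a uniform factor of $(n-2)$ attached to every vertex, so that the combined internal-plus-vertex $(n-2)$-dependence becomes $(n-2)(V - I) = (n-2)(1 - L)$ by the Euler identity. The residual vertex terms $-\sum (i+j-2)$, together with the $4L$ from loop integration and the leftover from the ghost vertices, I would then reduce using the half-edge identities to produce exactly $4 - (E_A + E_\chi + \tilde E)$. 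The two leading pieces then assemble into $(4-n)(L-1)$ plus the constant $4$ minus the total external leg count, as claimed.

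The main obstacle I expect is the careful treatment of the ghost vertices and of the asymmetry between the derivative weights $n-i-j$, $n-3$, and $n-4$: I must check that writing each as $(n-2)$ minus an integer valence-deficit reproduces the correct residual after applying the half-edge sum, and in particular that the Higgs-ghost vertex (with its better behaviour of $n-4$ rather than $n-3$) does not spoil the uniform extraction of $(n-2)$ per vertex but instead contributes a nonpositive correction consistent with the inequality. Since the statement is an inequality rather than an equality, any such slack works in our favour, so I would verify that every approximation used (replacing the maximal derivative count by the exact valence-based weight, and bounding the ghost-vertex contribution) only decreases $\omega$, thereby confirming the stated upper bound.
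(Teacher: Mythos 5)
Your proposal is correct and uses exactly the same two ingredients as the paper's proof — the half-edge counting identities relating $I_A,I_\chi,\tilde I$ and $E_A,E_\chi,\tilde E$ to the vertex valences, and Euler's formula $L=I-V+1$ — differing only in the order of substitution (you extract $(n-2)$ per vertex before applying Euler, the paper substitutes the half-edge relations first). The slack you identify at the Higgs-ghost vertices is indeed the only inequality absorbed beyond the starting bound, so the argument closes as claimed.
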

\begin{proof}
We use the relations
$$
2 I_A + E_A = \sum_{i+j=3}^n i V_{ij} + \tilde V_A; \qquad
2 I_\chi + E_\chi = \sum_{i+j=3}^n j V_{ij} + \tilde V_\chi; \qquad
2 \tilde I + \tilde E = 2\tilde V_A + 2 \tilde V_\chi.
$$
Indeed, these formulas count the number of half (gauge/Higgs/ghost) edges in a Feynman graph in two ways: from the number of edges and from the valences of the vertices. We use them to substitute for $2I_A, 2I_\chi$ and $2\tilde I$ in the above expression for $\omega$ so as to obtain
$$
\omega \leq 4L - I_An -I_\chi n - \tilde I n + n \left(\sum_{i,j} V_{ij} + \tilde V_A + \tilde V_\chi \right) - (E_A + E_\chi + \tilde E)
$$
from which the result follows at once from Euler's formula $L= I_A +I_\chi + \tilde I - \sum_{i,j} V_{ij} - \tilde V_A -\tilde V_\chi +1$.
\end{proof}
As a consequence, $\omega< 0$ if $E+ \tilde E >4$ so that the theory is powercounting renormalizable. Moreover, if $n \geq 8$ then $\omega < 0$ for all $L \geq 2$: all Feynman graphs are finite at loop order greater than 1. In this case, all divergent graphs are at one loop, and satisfy $E+ \tilde E \leq 4$. We conclude that the asymptotically expanded spectral action on an AC manifold is renormalizable, and if $n \geq 8$ then it is superrenormalizable. 

Of course, the spectral action on an AC manifold being a gauge theory, there is more to renormalizability than just power counting: we have to establish gauge invariance of the counterterms. 
We already know that the counterterms needed to render the perturbative quantization of the asymptotically expanded spectral action finite are of order $4$ or less in the fields and arise only from one-loop graphs. The key property of the effective action at one loop is that it is supposed to be BRST-invariant, $s(\Gamma_{1}) = 0$.
In particular, assuming a regularization compatible with gauge invariance, the divergent part $\Gamma_{1,\infty}$ is BRST-invariant. 
We will use results from \cite{Dix91, DTV85, DTV85b, BDK90,DHTV91} on BRST-cohomology for Yang--Mills type theories to determine the form of the BRST-closed functionals of order 4 or less in the fields. In fact, in these references a relation is established between BRST-cohomology and Lie algebra cohomology for the gauge group: BRST-closed functionals are given by integrals of gauge invariant polynomials in the fields.

First, recall that with respect to the orthogonal decomposition of $\su(A_F)$ of Proposition \ref{prop:su} we can write the curvature $F_{\mu\nu} = \sum_i F^i_{\mu\nu}$ with $F_{\mu\nu}^i \in \u(k_i, \F_i)$. Gauge invariant functionals are then given by
\begin{equation}
\label{ct:F}
\int \tr F^i_{\mu\nu} F^{i \mu\nu},
\end{equation}
for all $i$. These terms appear, though with a common pre-factor, in Theorem \ref{thm:CC}.

Let us then consider the field $\Phi$ with independent components $\phi_{\tilde e}^p$, as labeled by the edges of the graph $\tilde \Gamma$ introduced at the end of Section \ref{sect:class}. The index $p$ runs from $1, \ldots, \dim S_{\tilde e}$ and the field $\phi_{\tilde e}$ is in the representation of $\su(A_F)$ induced by the irreducible representations that are given by $s(\tilde e)$ and $t(\tilde e)$ in $\tilde \Gamma^{(0)}$. The most general form of a gauge invariant functional in the components of $\Phi$ of degree 2 is given by
\begin{equation}
\label{ct:mass}
\int \tr (\phi_{\tilde e}^{p_1})^*\phi_{\tilde e}^{p_2},
\end{equation}
for all $\tilde e,p_1,p_2$. Note that these terms appear in Theorem \ref{thm:CC} (cf. Equation \eqref{sa:mass}). There is also a term of second order in $\Phi$ involving covariant derivatives, it is:
\begin{equation}
\label{ct:kinetic}
\int \tr (\nabla_\mu \phi_{\tilde e}^{p_1})^* \nabla^\mu \phi_{\tilde e}^{p_2},
\end{equation}
and is also present in Theorem \ref{thm:CC} (cf. Equation \eqref{sa:kinetic}).

Slightly more complicated is the search for the gauge invariant functionals that are quartic in the fields $\phi_{\tilde e}^p$. In terms of the graph $\tilde \Gamma$, they are given by a combination of the following sums over cycles in $\tilde\Gamma$:
\begin{gather}
\label{eq:quartic}
\sum_{\tilde\gamma= \tilde e_1 \tilde e_2 \tilde e_3 \tilde e_4} \tr_{s(\tilde e_4)= t(\tilde e_1)} \phi_{\tilde e_1}^{p_1} \cdots \phi_{\tilde e_4}^{p_4};\qquad 
\sum_{\tilde\gamma= \bar {\tilde e} \tilde e} \tr_{s(\tilde e)= t(\bar{\tilde e})} (\phi_{\tilde e}^{p_1})^* \phi_{\tilde e}^{p_2}
\sum_{\tilde\gamma'= \bar {\tilde e}' \tilde e'} \tr_{s(\tilde e')= t(\bar {\tilde e}')} ( \phi_{\tilde e'}^{p_1'})^* \phi_{\tilde e'}^{p'_2}
\end{gather}
That is, all gauge invariant quartic polynomials in $\Phi$ arise by taking traces along cycles of length 4 in $\tilde \Gamma$, and traces along pairs of cycles of total length 4. In the latter case, we exclude the possibility that the cycles $\tilde\gamma$ and $\tilde \gamma'$ both connect to the vertex ${\bf 1}$ or the vertex ${\bf \bar 1}$. In fact, the contribution arising from such cycles can be written as the trace along a single cycle of length 4, due to the fact that $\tr_{1} : \C \to \C$ acts as the identity. 

\begin{ex}
Consider the following graph $\tilde \Gamma$:
\begin{center}
\vspace{-10mm}
\begin{tabular}{c}
\begin{xy} 0;<3mm,0mm>:<0mm,3mm>::0;0,
,(11,-3)*{{\bf 2}}
,(17,-3)*{{\bf 1}}
,(23,-3)*{{\bf 3}}
,(11,-5)*\cir(0.3,0){}
,(17,-5)*\cir(0.3,0){}
,(23,-5)*\cir(0.3,0){}
,(11.2,-5);(16.8,-5)**\dir{-}
,(17.2,-5);(22.8,-5)**\dir{-}
%,(11.1,-4.9);(22.9,-4.9)**\crv{(17,-2.5)}
\end{xy}
\end{tabular}
\end{center}
The pair of cycles $({\bf 21})({\bf 12})$ and  $({\bf 13})({\bf 31})$ give a contribution 
$$
\tr \phi_{(\bf 21)}\phi_{(\bf 12)} \tr \phi_{(\bf 31)}\phi_{(\bf 13)} = (\phi_{(\bf 21)}\phi_{(\bf 12)})( \phi_{(\bf 31)}\phi_{(\bf 13)})
$$
since $(\phi_{(\bf 21)}\phi_{(\bf 12)})$ and $(\phi_{(\bf 31)}\phi_{(\bf 13)})$ are elements in $\Hom(\C,\C) \simeq \C$. Now, the concatenated cycle $({\bf 21})({\bf 12})({\bf 13})({\bf 31})$ of length 4 gives the same contribution 
$$
\tr \phi_{(\bf 21)}\phi_{(\bf 12)} \phi_{(\bf 31)}\phi_{(\bf 13)} = (\phi_{(\bf 21)}\phi_{(\bf 12)})( \phi_{(\bf 31)}\phi_{(\bf 13)})
$$
for the same reason.
\end{ex}

Now, adopting Definition \ref{defn:prop-R}, if the Krajewski diagram $\Gamma$ is R-connected (in dimension 4) the above traces can always be written in terms of cycles of length 4 in $\Gamma$ which are precisely the terms that are present in Theorem \ref{thm:CC} (cf. Equation \eqref{sa:quartic}). We conclude:
\begin{thm}
Let $M \times F$ be an almost commutative manifold with $\dim M = 4$; suppose that $A_F$ contains no copies of $\R$, and either no complex subalgebras, or at least one non-trivial complex subalgebra (cf. Proposition \ref{prop:su}). Consider the asymptotically expanded spectral action up to order $n \geq 4$. 

If the Krajewski diagram describing the finite real spectral triple for $F$ is R-connected in dimension 4, then the asymptotically expanded spectral action (with $f_{4-m}=0$ for all $m > n$) for $M \times F$ is renormalizable as a gauge theory. Moreover, it is superrenormalizable as a gauge theory if $n \geq 8$. 
\end{thm}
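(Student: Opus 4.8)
The plan is to feed the power-counting bound of the Lemma into the known structure of the gauge-theory BRST-cohomology, reserving R-connectedness for a single matching step at the very end. First I would read off from $\omega \leq (4-n)(L-1) + 4 - (E_A + E_\chi + \tilde E)$ that a graph can be superficially divergent only if $E_A + E_\chi + \tilde E \leq 4$, and that for $n \geq 8$ one has $\omega < 0$ whenever $L \geq 2$; thus in that regime every divergence is a one-loop effect (superrenormalizability), while for general $n \geq 4$ the divergent $1$PI functions carry at most four external gauge/Higgs/ghost legs. Since $(4-n)(L-1) \leq 0$ for $n \geq 4$, the bound $\omega \leq 4 - (E_A + E_\chi + \tilde E)$ holds uniformly, so the two-point counterterms carry at most two derivatives, the three-point ones at most one, and the quartic ones none. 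In particular no higher-derivative counterterm is generated: the regulator terms $\int \tr_F \hat F^{\mu\nu} \Delta^k \hat F_{\mu\nu}$ and their scalar analogues with $k \geq 1$ stay unrenormalized, as befits genuine regulators.

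Next I would upgrade this dimensional count to a statement about the \emph{form} of the counterterms. Assuming a BRST-compatible regularization, the divergent part $\Gamma_{1,\infty}$ of the effective action satisfies $s(\Gamma_{1,\infty}) = 0$, so the results of \cite{Dix91, DTV85, DTV85b, BDK90, DHTV91} identify it, modulo BRST-exact pieces, with an integral of a gauge-invariant polynomial of order $\leq 4$ in the fields. To list these invariants I use that the hypotheses on $A_F$ (no real factors, and the stated complex-subalgebra condition) make Proposition \ref{prop:su} yield a bijection between the irreducible representations of $A_F$ and of $\su(A_F)$; this is what justifies indexing the independent scalar components $\phi_{\tilde e}^p$ by the edges of $\tilde \Gamma$ and reading off their gauge invariants from that graph. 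The complete list is then \eqref{ct:F} for the curvatures, \eqref{ct:mass} and \eqref{ct:kinetic} for the quadratic scalar invariants, and the two families of \eqref{eq:quartic} for the quartic ones.

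The decisive step, and the one I expect to be the main obstacle, is to verify that each such invariant already occurs in $S[A,\Phi]$ of Theorem \ref{thm:CC}. The curvature term \eqref{ct:F} and the quadratic scalar terms \eqref{ct:mass}, \eqref{ct:kinetic} are visibly present, matching \eqref{sa:mass} and \eqref{sa:kinetic}. The quartic sector is where the geometry of $\Gamma$ intervenes: the invariants \eqref{eq:quartic} are traces along length-$4$ cycles, and along pairs of cycles of total length $4$, in the projected graph $\tilde \Gamma$, whereas the quartic monomials actually produced by the spectral action, collected in \eqref{sa:quartic}, are traces along length-$4$ cycles of the full Krajewski diagram $\Gamma$. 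R-connectedness in dimension $4$ (Definition \ref{defn:prop-R}) is the graph-theoretic requirement guaranteeing that every such cycle of $\tilde\Gamma$ and every admissible pair lifts along $\psi$, respectively along $\psi$ and $\psi \circ j$, to a single length-$4$ cycle of $\Gamma$; the pairs sharing a vertex ${\bf 1}$ or ${\bf \bar 1}$ are excluded from the hypothesis precisely because $\tr_{\bf 1}$ is the identity, which already rewrites them as a single length-$4$ cycle. Under this condition every gauge-invariant quartic counterterm coincides with a term in \eqref{sa:quartic}.

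I would then conclude that, under R-connectedness, every admissible counterterm is structurally of the same form as a term already present in $S[A,\Phi] + S_\gf + S_\gh$, so that all one-loop (and, when $n < 8$, higher-loop) divergences can be absorbed by renormalizing the coefficients already occurring in the action; this is renormalizability as a gauge theory. The coefficients of the several curvature and quartic invariants must in general be renormalized independently, so multiplicative renormalizability fails, but no operator outside the original list is ever required, and for $n \geq 8$ the divergences are confined to one loop, giving superrenormalizability. The only genuine difficulty is the quartic matching above — ensuring that the representation-theoretic invariants living on $\tilde\Gamma$ never outrun the cycle data of $\Gamma$ — which is exactly what R-connectedness is designed to prevent.
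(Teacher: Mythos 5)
Your proposal is correct and follows essentially the same route as the paper: the power-counting lemma restricts divergences to graphs with at most four external legs (and to one loop when $n \geq 8$), BRST-cohomology reduces the divergent part of the effective action to gauge-invariant polynomials of order $\leq 4$ in the fields, and R-connectedness in dimension $4$ supplies exactly the lifting of length-$4$ cycles and admissible cycle pairs from $\tilde\Gamma$ to $\Gamma$ needed to match every quartic invariant with a term of \eqref{sa:quartic}. Your additional remarks on the derivative count of the counterterms and on absorbing the higher-loop divergences when $4 \leq n < 8$ are consistent refinements of the paper's argument.
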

As a corollary, we find that the asymptotically expanded Yang--Mills spectral action is superrenormalizable ($n \geq 8$), as was previously shown in \cite{Sui11b,Sui11c}. Indeed, the spectral triple of Example \ref{ex:ym-st} has Krajewski diagram
\begin{center}
\vspace{-10mm}
\begin{tabular}{c}
\begin{xy} 0;<3mm,0mm>:<0mm,3mm>::0;0,
,(11,-2.5)*{{\bf N}}
,(8.5,-5)*{{\bf N}}
,(11,-5)*\cir(0.3,0){}
\end{xy}
\end{tabular}
\end{center}
which is R-connected in a trivial way.

Similarly, Proposition \ref{prop:kra-sm} implies that the asymptotically expanded spectral action that at lowest order is the Standard Model, is renormalizable as a gauge theory. In particular, choosing $n=4$ this implies that the Standard Model is renormalizable as a gauge theory.

%\begin{rem}
%The above argument can be applied also to the action given by the lowest-order terms appearing in Theorem \ref{thm:CC}. Of course, this action is not power-counting superrenormalizable, but it is power-counting renormalizable. The above BRST-invariance of one-loop counterterms is generalized to an order-by-order in the loop number argument, adopting eg. the Zinn--Justin equation. One then arrives at the conclusion that if the Krajewski diagram for the underlying almost commutative manifold $M \times F$ is R-connected (dimension 4), then the action of Theorem \ref{thm:CC} is renormalizable as a gauge theory.
%\end{rem}

\begin{ex}
Let us illustrate the possible failure of renormalizability for the Krajewski diagram in Example \ref{ex:not-R} which is not R-connected. There are fields $\phi_{({\bf 12})}$ and $\phi_{({\bf \bar 1 3})}$ that could combine to give a gauge-invariant counterterms proportional to
$$
\left( (\phi_{({\bf 12})})^* \phi_{({\bf 12})} \right) \left((\phi_{({\bf \bar 1 3})})^* \phi_{({\bf \bar 1 3})}\right)
$$
However, this term can never appear in the asymptotic expansion of the spectral action, since the edges $({\bf 12}),({\bf 21}), ({\bf \bar 1 3})$ and $({\bf 3 \bar 1})$ do not lift to a cycle in $\Gamma$. 
\end{ex}

Note that the asymptotically expanded spectral action is not {\it multiplicatively} renormalizable, since the coefficients in front of the counterterms might be different for different indices (such as $i$, $\tilde e$, and $p$). 
This is in contrast with the classical action in Theorem \ref{thm:CC} where there is a typical {\it unification} of couplings for all simple factors of the gauge group. This suggests that one takes the spectral action $S[A,\Phi]$ (plus gauge fixing) as a starting point for the renormalization group flow to then run the action to arbitrary energy scales. 

It is an open question whether this approach to renormalizing the asymptotically expanded spectral action using the intrinsic higher-derivative regulators is equivalent to perturbatively quantizing the gauge theory defined by the lowest-order terms (appearing in Theorem \ref{thm:CC}) using, say, dimensional regularization and minimal subtraction. Evidence that this might be true can be found in \cite{MRR95b} and is the subject of future research.

%\bibliography{references}

\begin{thebibliography}{99}

\bibitem{BoeS10}
J.~Boeijink and W.~D. van Suijlekom.
\newblock The noncommutative geometry of {Y}ang-{M}ills fields.
\newblock {\em J. Geom. Phys.} 61 (2011)  1122--1134.

\bibitem{BDK90}
F.~Brandt, N.~Dragon, and M.~Kreuzer.
\newblock Lie algebra cohomology.
\newblock {\em Nucl. Phys.} B332 (1990)  250.

\bibitem{BroS10}
T.~{\noopsort{broek}}van~den Broek and W.~D. van Suijlekom.
\newblock {Supersymmetric QCD and noncommutative geometry}.
\newblock {\em Commun. Math. Phys.} 303 (2011)  149--173.

\bibitem{BroS11}
T.~{\noopsort{broek}}van~den Broek and W.~D. van Suijlekom.
\newblock {Supersymmetric QCD from noncommutative geometry}.
\newblock {\em Phys. Lett.} B699 (2011)  119--122.

\bibitem{CC96}
A.~H. Chamseddine and A.~Connes.
\newblock Universal formula for noncommutative geometry actions: {U}nifications
  of gravity and the standard model.
\newblock {\em Phys. Rev. Lett.} 77 (1996)  4868--4871.

\bibitem{CC97}
A.~H. Chamseddine and A.~Connes.
\newblock The spectral action principle.
\newblock {\em Commun. Math. Phys.} 186 (1997)  731--750.

\bibitem{CCM07}
A.~H. Chamseddine, A.~Connes, and M.~Marcolli.
\newblock {Gravity and the standard model with neutrino mixing}.
\newblock {\em Adv. Theor. Math. Phys.} 11 (2007)  991--1089.

\bibitem{C94}
A.~Connes.
\newblock {\em Noncommutative Geometry}.
\newblock Academic Press, San Diego, 1994.

\bibitem{CM07}
A.~Connes and M.~Marcolli.
\newblock {\em Noncommutative Geometry, Quantum Fields and Motives}.
\newblock AMS, Providence, 2008.

\bibitem{Dix91}
J.~A. Dixon.
\newblock {Calculation of BRS cohomology with spectral sequences}.
\newblock {\em Commun. Math. Phys.} 139 (1991)  495--526.

\bibitem{DTV85}
M.~Dubois-Violette, M.~Talon, and C.~M. Viallet.
\newblock {BRS} algebras: Analysis of the consistency equations in gauge
  theory.
\newblock {\em Commun. Math. Phys.} 102 (1985)  105.

\bibitem{DTV85b}
M.~Dubois-Violette, M.~Talon, and C.~M. Viallet.
\newblock Results on {BRS} cohomology in gauge theory.
\newblock {\em Phys. Lett.} B158 (1985)  231.

\bibitem{DHTV91}
M.~Dubois-Violette, M.~Henneaux, M.~Talon, and C.-M. Viallet.
\newblock {Some results on local cohomologies in field theory}.
\newblock {\em Phys. Lett.} B267 (1991)  81--87.

\bibitem{DS11}
K.~{\noopsort{dungen}}van~den Dungen and W.~D. van Suijlekom.
\newblock Electrodynamics from noncommutative geometry. To appear in {\it J. Noncommut. Geom.} [arXiv:1103.2928].

\bibitem{FS80}
L.~Faddeev and A.~Slavnov.
\newblock {\em Gauge Fields. Introduction to Quantum Theory}.
\newblock Benjaming Cummings, 1980.

\bibitem{Gil84}
P.~B. Gilkey.
\newblock {\em Invariance theory, the heat equation, and the {A}tiyah-{S}inger
  index theorem}, volume~11 of {\em Mathematics Lecture Series}.
\newblock Publish or Perish Inc., Wilmington, DE, 1984.

\bibitem{Hoo71b}
G.~'t~Hooft.
\newblock {Renormalizable Lagrangians for massive Yang-Mills fields}.
\newblock {\em Nucl.Phys.} B35 (1971)  167--188.

\bibitem{ILV11}
B.~Iochum, C.~Levy, and D.~Vassilevich.
\newblock Spectral action beyond the weak-field approximation.
\newblock arXiv:1108.3749.

\bibitem{JKSS07}
J.-H. Jureit, T.~Krajewski, T.~Schucker, and C.~A. Stephan.
\newblock {On the noncommutative standard model}.
\newblock {\em Acta Phys. Polon.} B38 (2007)  3181--3202.

\bibitem{JS08}
J.-H. Jureit and C.~A. Stephan.
\newblock {On a classification of irreducible almost-commutative geometries
  IV}.
\newblock {\em J. Math. Phys.} 49 (2008)  033502.

\bibitem{Kra97}
T.~Krajewski.
\newblock {Classification of finite spectral triples}.
\newblock {\em J. Geom. Phys.} 28 (1998)  1--30.

\bibitem{Lnd97}
G.~Landi.
\newblock {\em An Introduction to Noncommutative Spaces and their Geometry}.
\newblock Springer-Verlag, 1997.

\bibitem{MRR95b}
C.~Martin and F.~Ruiz~Ruiz.
\newblock {Higher covariant derivative regulators and nonmultiplicative
  renormalization}.
\newblock {\em Phys. Lett.} B343 (1995)  218--224.

\bibitem{Sla71}
A.~A. Slavnov.
\newblock {Invariant regularization of nonlinear chiral theories}.
\newblock {\em Nucl. Phys.} B31 (1971)  301--315.

\bibitem{Sla72b}
A.~A. Slavnov.
\newblock {Invariant regularization of gauge theories}.
\newblock {\em Teor. Mat. Fiz.} 13 (1972)  174--177.

\bibitem{SS07}
R.~Squellari and C.~A. Stephan.
\newblock {Almost-Commutative geometries beyond the Standard Model. III. Vector
  doublets}.
\newblock {\em J. Phys.} A40 (2007)  10685--10698.

\bibitem{Ste06}
C.~A. Stephan.
\newblock {Almost-commutative geometries beyond the standard model}.
\newblock {\em J. Phys.} A39 (2006)  9657.

\bibitem{Ste07}
C.~A. Stephan.
\newblock {Almost-commutative geometries beyond the standard model. II. New
  colours}.
\newblock {\em J. Phys.} A40 (2007)  9941.

\bibitem{Sui11c}
W.~D. {\noopsort{suijlekom}}van Suijlekom.
\newblock {Renormalization of the asymptotically expanded Yang-Mills spectral
  action}.
\newblock To appear in {\it Commun. Math. Phys.} [arXiv:1104.5199].

\bibitem{Sui11b}
W.~D. {\noopsort{suijlekom}}van Suijlekom.
\newblock {Renormalization of the spectral action for the Yang-Mills system}.
\newblock {\em JHEP} 1103 (2011)  146.


\end{thebibliography}
\newcommand{\noopsort}[1]{}\def\cprime{$'$}

\end{document}